  \renewcommand*{\AC@hyperlink}[2]{%
    \begingroup
      \hypersetup{hidelinks}%
      \hyperlink{#1}{#2}%
    \endgroup
  }%
\newtheorem{result}{Result}
\newtheorem{result_reset}{Result}
\newtheorem{observation}{Observation}
\newenvironment{proofsketch}
  {\par\pushQED{\qed}\normalfont\topsep6pt \trivlist
   \item[\hskip\labelsep\itshape Proof sketch.]\ignorespaces}
  {\popQED\endtrivlist\addvspace{6pt}}
\newcommand{\addMTQ}[1]{\textcolor{magenta}{#1}}
\DeclareMathOperator{\Tr}{Tr}
\begin{document}
\title{Strict hierarchy between $n$-wise measurement simulability, compatibility structures, and multi-copy compatibility}

\author[1,2,3]{Lucas Tendick}
\email{lucas-amadeus.tendick@inria.fr}
\affiliation[1]{Inria, Université Paris-Saclay Palaiseau, France}
\affiliation[2]{CPHT, CNRS, Ecole Polytechnique, Institut Polytechnique de Paris, Palaiseau, France}
\affiliation[3]{LIX, CNRS, Ecole Polytechnique, Institut Polytechnique de Paris, Palaiseau, France}
\orcid{https://orcid.org/0000-0001-5210-6710}
\author[4]{Costantino Budroni}
\affiliation[4]{Department of Physics “E. Fermi”, University of Pisa, Largo B. Pontecorvo 3, 56127 Pisa, Italy}
\orcid{0000-0002-6562-7862}
\author[5]{Marco Túlio Quintino}
\affiliation[5]{Sorbonne Université, CNRS, LIP6, F-75005 Paris, France}
\orcid{0000-0003-1332-3477}

\begin{abstract}

The incompatibility of quantum measurements, i.e. the fact that certain observable quantities cannot be measured jointly is widely regarded as a distinctive quantum feature with important implications for the foundations and the applications of quantum information theory. While the standard incompatibility of multiple measurements has been the focus of attention since the inception of quantum theory, its generalizations, such as measurement simulability, $n$-wise incompatibility, and mulit-copy incompatibility have only been proposed recently. Here, we point out that all these generalizations are differing notions of the question of how many measurements are \emph{genuinely contained} in a measurement device. We then show, that all notions do differ not only in their operational meaning but also mathematically in the set of measurement assemblages they describe. We then fully resolve the relations between these different generalizations, by showing a strict hierarchy between these notions. Hence, we provide a general framework for generalized measurement incompatibility. Finally, we consider the implications our results have for recent works using these different notions. 

\end{abstract}

\maketitle

\tableofcontents

\section{Introduction}

 Measuring a system's properties takes on a central role in quantum theory and denotes one of the most crucial deviations from our classical view of the world~\cite{Wheeler1983, Busch2016}. While a measurement in classical physics simply reveals the pre-existing value of a physical quantity without disturbing the measured system, quantum measurements have to be regarded as an inherently random process~\cite{RevModPhys.89.015004, Acin2016}, which reveals information only at the cost of disturbing the measured system~\cite{Myrvold_2009}. The role of measurements in quantum theory becomes even more puzzling, when not only one but two or more observable quantities are considered simultaneously. First noted by Heisenberg $100$ years ago~\cite{heisenberg1925umdeutung}, the order in which two observable quantities are measured is pivotal and leads generally to different results. Heisenberg's initial observation led shortly after to the formulation of the celebrated Heisenberg-Robertson uncertainty relation~\cite{Heisenberg1927, PhysRev.34.163}, stating that two non-commuting observable quantities cannot be measured simultaneously with arbitrary precision for all quantum systems. This impossibility of measuring certain observable quantities simultaneously, nowadays known as measurement incompatibility, might initially be seen as a limitation or drawback but is in fact one of quantum physics' most striking features.  Measurement incompatibility lies at the heart of many foundational aspects of quantum theory, with an one-to-one relationship to quantum steering~\cite{RevModPhys.92.015001, Cavalcanti2016, PhysRevLett.98.140402,Quintino2014JM,Uola2014JM,PhysRevLett.115.230402},  quantum state discrimination with post-measurement information~\cite{Barnett2009, PhysRevLett.124.120401,PhysRevLett.122.130402,PhysRevLett.122.130404,Oszmaniec2019,PhysRevLett.125.110401,PhysRevLett.125.110402}, semi-quantum prepare-and-measure tasks~\cite{Guerini2019Distributed}, and operational contextuality~\cite{Tavakoli2020Operational}. Also, measurement incompatibility is a required resource for tasks such as Bell nonlocality~\cite{Nonlocality_review, Bell_seminal,Quintino2015Incompatible,PhysRevA.97.012129,Bene2018}, quantum contextuality~\cite{RevModPhys.94.045007},  prepare-and-measure scenarios~\cite{Gallego2010PM,Frenkel2015JM,Egelhaaf2025PJ_JM}, and quantum random access codes~\cite{Carmeli2020, PhysRevLett.125.080403}.
\indent Because of its central importance to both the foundations and applications of quantum theory, measurement incompatibility has attracted widespread attention and has been extensively studied, particularly over the past decade, see~\cite{Heinosaari2016,RevModPhys.95.011003} for recent reviews. One recent development in the field of measurement incompatibility is the effort to generalize its notion~\cite{Guerini2017, Carmeli2016,PhysRevLett.123.180401} in order to gain even deeper insights into the properties of sets of different measurements (from here on called an assemblage). On a high level, measurement incompatibility denotes the effect that multiple measurements on the same state cannot be seen as one effective measurement, from which the outcomes of all measurements can be recovered jointly. In contrast to that, generalized notions of measurement incompatibility study how many of these effective measurements, or alternatively copies of a given quantum state, are minimally necessary to recover the statistics of a given assemblage. This represents a similar development to that in the field of quantum entanglement~\cite{RevModPhys.81.865, Guehne2009, horodecki2024multipartiteentanglement, Luo2020, PhysRevA.103.L060401}, where the study of different notions of multipartite entanglement reveals a rich structure and genuinely new phenomena beyond standard bipartite entanglement. While the several notions of measurement incompatibility have already been applied beyond the initial definition \cite{2408.08347, PhysRevLett.131.120202, PhysRevA.109.062203, 2412.15615, PhysRevLett.122.130403, das2025operationalapproachclassifyingmeasurement, 2409.14991}, their inner relation between each other remains unclear. Even more, at the current state of the field, the different generalizations of measurement incompatibility are used without taking into account other notions or being even aware that competing concepts exists. This results in a very scrambled up picture of the state-of-the-art. \\
\indent In this work, we resolve this issue by presenting one coherent framework for the existing notions of generalized measurement incompatibility and by establishing a strict hierarchy between them. That is, we first point out which different generalizations exist and discuss their operational or geometrical interpretation. Then, we show that these different notions do not only behave operationally differently but describe mathematically different subsets in the set of all quantum measurements and lead generally to different observations. More precisely, we establish a hierarchy of set inclusions between the notions of measurement simulability, compatibility structures, and multi-copy incompatibility. To do so, we establish in Result~\ref{result1-nontechnical} to Result~\ref{result5-nontechnical} the different set relations and establish a universally valid bound on the noise-robustness of measurement assemblage with respect to the set of $n$-copy jointly measurable assemblages, which denotes the strongest existing generalization of measurement incompatibility. Finally, we discuss the implications of our results for existing works in the literature and discuss future directions for research in the field. \\
\indent This work is structured as follows. In Section~\ref{Sec:MainResults}, we present a summary of our main results in a high-level way to give a quick and accessible overview for the different sub-communities in the field. In Section~\ref{Sec:Preliminaries}, we formally introduce the concept of measurement incompatibility and review the most important notions. In Section~\ref{Sec:Revisiting} we revisit the different generalizations of measurement incompatibility and fix final notations. Then, in Section~\ref{Sec:Results}, we present the formal statements of our results alongside their corresponding proofs. Finally, we discuss the implications of our work in Section~\ref{Sec:Implications} before we conclude with a more general discussion in Section~\ref{Sec:Discussion}.

\section{Summary of main contributions}
\label{Sec:MainResults}

\indent Our main result is a hierarchy of set inclusions, stated in Eq.~\eqref{Eq:MainResult} for the various generalizations of measurement incompatibility that have previously been developed in~\cite{Guerini2017, Carmeli2016,PhysRevLett.123.180401}. On a conceptual level, we point out that these different generalizations are competing models for the ability to conclude that $m$ measurements are \emph{genuinely different} and cannot be reduced (in some operational sense, depending on the model) to fewer measurements. To the best of our knowledge, these different generalizations are currently used without taking into account all alternative models in the existing literature and without a clear operational justification for a specific choice, see e.g.,~\cite{PhysRevA.109.062203, 2408.08347}. Here, we describe precisely the operational differences between the models and formally prove that there exist a hierarchical structure of the sets of measurement assemblages they describe. \\
\indent To give a high-level overview of our main result and its contribution to the field of measurement incompatibility and simulability, suitable for readers from varying backgrounds, we will state our findings here in a non-technical way, divided into smaller results. We illustrate the operational differences between the various generalizations of measurement incompatibility, by summarizing our results in Fig~\ref{Fig:Circuits}. The different competing models that aim to generalize measurement incompatibility and the state-of-the-art will be revisited in Sec.~\ref{Sec:Revisiting}, while the formal proofs corresponding to the statements in Fig.~\ref{Fig:Circuits} and Eq.~\eqref{Eq:MainResult} are given in Section~\ref{Sec:Results}. \\
\indent As we mentioned above, all generalizations of incompatibility have in common that they answer (using a different operational meaning) the question of whether a collection of measurements, a so-called assemblage $ \mathcal{M} = \left\lbrace \left\lbrace M_{a \vert x} \right\rbrace_a \right\rbrace_x  $ with $x \in [m] \coloneqq x \in \left\lbrace 1, \cdots, m \right\rbrace$, can be understood as only $n < m$ measurements. The special case where one is interested in whether $m$ measurements can be understood as a single measurement, i.e., the case $n=1$  describes operationally exactly the notion of standard measurement incompatibility. 
\begin{figure}[t]
\begin{center}
\includegraphics[width=0.95\textwidth]{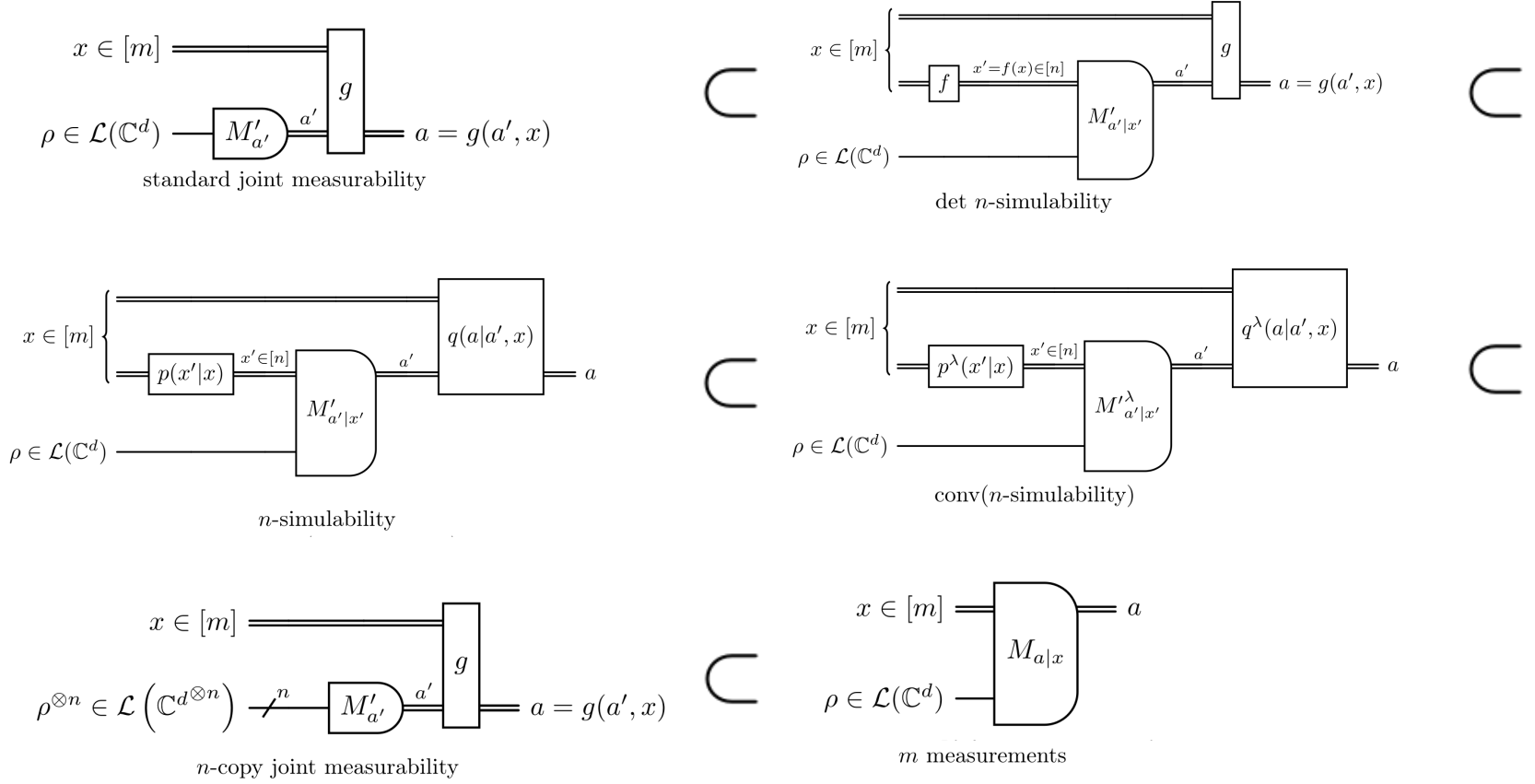}
 \caption{Graphical representation of our main result. We are considering the notion of \emph{standard joint measurability} ($\mathrm{JM}$) and its various generalizations, depicted here in their operational interpretation. In all cases, we consider a set $\mathcal{M}$ of $m$ measurements with inputs $x \in [m]$ and outputs $a$. These measurements are performed on a state $\rho \in \mathcal{L}(\mathds{C}^d)$ and lead to the statistics $p(a \vert x) = \mathrm{Tr}[M_{a \vert x} \rho]$. \emph{Standard joint measurability} asks, whether there exists a single measurement $ \left\lbrace M'_{a'} \right\rbrace_{a'}$, which leads to the same statistics for any state $\rho$, using appropriate deterministic post-processing of the measurement outcomes $a'$. \emph{Deterministic $n$-simulability} ($\mathrm{SIM}^{\mathrm{Det}}_n$) asks whether $m$ measurements can be simulated using $n < m$ measurements, and deterministic pre- and post-processing. More generally, \emph{n-simulability} ($\mathrm{SIM}_n$) considers the same question but allows for probabilistic pre-and post-processing. 
 The \emph{conv ($n$-simulability)} model ($\mathrm{Conv}(\mathrm{SIM}_{n})$) allows for (shared) randomness between the pre-processing, the performed quantum measurements, and the post-processing and corresponds geometrically to the convex hull of n-simulability. Finally, \emph{$n$-copy joint measurability} ($\mathrm{Copy}_n$) allows only for a single simulating measurement $ \left\lbrace M'_{a'} \right\rbrace_{a'}$ as in standard joint measurability, however, it is allowed to act on $n$ copies of a state $\rho$ simultaneously. In all circuits, there is a notion of causality from left (earlier) to the right (later). Classical information is represented by double-lines while quantum information, in form of a $d$-dimensional quantum state $\rho \in \mathcal{L}(\mathds{C}^d)$, is represented by a single line. Deterministic functions are denoted by $f,g$, while probabilistic processes are denoted by $p,q$. The \emph{strict} set inclusion $\subset$ between each of the circuits, represents the strict hierarchy of assemblages that can be simulated with a given model, corresponding to Eq.~\eqref{Eq:MainResult}.
 } 
  \label{Fig:Circuits}
\end{center}
\end{figure}
Our main result can be summarized as the following hierarchy of set inclusions, corresponding to the different models depicted in Fig.~\ref{Fig:Circuits}. Namely, we show that it holds
\begin{align}
\label{Eq:MainResult}
 \mathrm{JM} \subset \mathrm{SIM}^{\mathrm{Det}}_{n} \subset \ \mathrm{SIM}_{n} \subset \ \mathrm{Conv}(\mathrm{SIM}_{n}) \equiv \mathrm{JM}^{\mathrm{conv}}_{n} \subset \ \mathrm{Copy}_n \subset \ \mathrm{All}_m.
\end{align}
Note that Eq.~\eqref{Eq:MainResult} also includes the set $\mathrm{JM}^{\mathrm{conv}}_{n}$ not depicted in Fig.~\ref{Fig:Circuits} (as it has a clear geometric and less of an operational interpretation), which was first studied in~\cite{PhysRevLett.123.180401} and is shown here to be equivalent to the set $\mathrm{Conv}(\mathrm{SIM}_{n})$. \\
\indent Our main result is obtained by combining multiple smaller results (see Result~\ref{result1-formal} to Result~\ref{result5-formal}), combined with relations between some of the sets of interest already established prior to this work. To clearly lay out our novel findings, we also state these results here in a non-technical way, before stating their technical version, alongside proofs in Section~\ref{Sec:Results}.  Note that for all following results, and the rest of the manuscript in general, we consider the case where $1 < n < m$, unless stated otherwise.

\begin{result} [Insufficiency of deterministic pre-processing]
\label{result1-nontechnical}
The set of measurements that are $n$-simulable, according to the definition in~\cite{Guerini2017}, (see Eq.~\eqref{Eq:SimuablityDef}) i.e., the set-of measurements that can be simulated using $n$ quantum measurements and classical pre-and post-processing
is strictly larger than the set of $n$-simulable measurements with deterministic pre-processing only.
\end{result}
\begin{proofsketch}
To show this result, we present an explicit example of three qubit measurements that can be simulated by two measurements with probabilistic classical pre-processing by stating the simulation model explicitly. Then, we show that no such model exists using deterministic classical pre-processing by optimizing over all possible strategies. 
\end{proofsketch}

\begin{result}[Non-convexity of the set of $n$-simulable measurements]
\label{result2-nontechnical}
The set of $n$-simulable measurements~\cite{Guerini2017} is not convex, hence being strictly contained in its convex hull.
\end{result}
\begin{proofsketch}
We consider specifically the first non-trivial scenario, i.e., $n=2$ and $m=3$. More specifically, we consider the three noisy Pauli measurements, defined by the observables $\sigma_x, \sigma_y, \sigma_z$ and show that there exists a noise regime in which the noisy measurements cannot be simulated with $2$ measurements, while being in its convex hull. The proof that the noisy measurements are in the convex hull is given by explicitly presenting a convex combination of $n$-simulable measurements. Then, we show that this same set of noisy Pauli measurements is not $2$-simulable by combining an SDP-based numerical method with an $\epsilon$-net grid approach and an error estimation argument to obtain a rigorous computational result.  
\end{proofsketch}

\begin{result} [Correspondence between measurement simulability and compatibility structures]
\label{result3-nontechnical}
The convex hull of the set of $n$-simulable measurements~\cite{Guerini2017} is precisely the set of $n$-wise compatible measurements (see Eq.~\eqref{Def:Genuine-n-wise-comp-structures}) as introduced in~Ref.~\cite{PhysRevLett.123.180401}. This shows that two definitions that were previously used by the measurement incompatibility and simulability community are strongly linked.
\end{result}
\begin{proofsketch}
We show this result by first understanding the correspondence between deterministic  $n$-simulability (corresponding to the set $ \mathrm{SIM}^{\mathrm{Det}}_{n} $) and the compatibility structures introduced in ~Ref.~\cite{PhysRevLett.123.180401}. Then, we take explicitly the convex hull of the set $\mathrm{SIM}_{n}$ and identify a one-to-one correspondence with the set of $n$-wise compatible measurements, here denoted by $\mathrm{JM}^{\mathrm{conv}}_{n}$.    
\end{proofsketch}

\begin{result} [Connection between $n$-wise compatible measurements and $n$-copy jointly measurable measurements]
\label{result4-nontechnnical}
The set of $n$-wise compatible measurements as introduced in~Ref.~\cite{PhysRevLett.123.180401} (see Eq.~\eqref{Eq:DefinitionKcopy}) is strictly contained in the set of $n$-copy jointly measurable measurements introduced in Ref.~\cite{Carmeli2016}.
\end{result}
\begin{proofsketch}
The result follows from using a known connection between $n$-simulability and $n$-copy joint measurability \cite{Carmeli2016,PhysRevA.109.062203}, combined with the connection between $n$-simulability and $n$-wise compatible measurements established via Result~\ref{result3-nontechnical}.
\end{proofsketch}

\begin{result}[A universal lower bound on the $n$ copy joint measurability for $m$ measurements]
\label{result5-nontechnical}
The incompatibility depolarizing-noise robustness for the $n$-copy joint measurability of any set of $m$ measurement settings on a qudit system of dimension $d$ is lower bounded by $\eta^*=\frac{n(d+m)}{m(d+n)}$. 
\end{result}
\begin{proofsketch}
We obtain this lower bound by using the optimal cloning machine introduced and studied in~\cite{PhysRevA.58.1827} and using that noise-free measurements on noisy copies of a state $\rho$ can be seen as noisy measurements on noise-free copies of $\rho$. This result directly extends the finding first presented in Sec.~2.4 of Ref.~\cite{Heinosaari2016} for the case of standard joint measurability. 
\end{proofsketch}

\section{Preliminaries: measurement incompatibility}
\label{Sec:Preliminaries}

 We describe a quantum measurement by a \ac{POVM}, i.e., a set $\left\lbrace M_a \right\rbrace_a$ of operators $0 \leq M_{a} \leq \mathds{1}$ such that $\sum_a M_a = \mathds{1}$. Given a state $\rho$, the probability of obtaining outcome $a$ is given by the Born rule $p(a) = \Tr[M_a \rho]$. A \emph{measurement assemblage} is a set of different \acp{POVM} with operators $ \mathcal{M} \coloneqq \left\lbrace \left\lbrace M_{a \vert x} \right\rbrace_a \right\rbrace_x$, where $x$ denotes the particular measurement $ \mathcal{M}_x \coloneqq \left\lbrace M_{a \vert x} \right\rbrace_a  $. An assemblage $\mathcal{M}$ is called \emph{jointly measurable} (or compatible) if it can be simulated by a single measurement (often called \emph{parent \ac{POVM}}) $\left\lbrace M'_{a'} \right\rbrace_{a'}$ and conditional probabilities $q(a \vert x, a')$ such that 
\begin{align}
\label{Eq:IncompatiblityDef}
M_{a \vert x} = \sum_{a'} q(a \vert x, a') M'_{a'} \ \forall \ a,x,  
\end{align}
and it is called \emph{incompatible} otherwise. Note here, that there is generally no restriction on the number of possible outcomes $a'$ of the parent \ac{POVM} $\left\lbrace M'_{a'} \right\rbrace_{a'}$. \\
\indent Joint measurability can be interpreted as follows (see also Fig.~\ref{Fig:MeasurementIncompatiblity}): instead of actually performing the (seemingly) different measurements of the assemblage $\mathcal{M}$, one can simply perform the single measurement $\left\lbrace M'_{a'} \right\rbrace_{a'}$ and post-process the outcomes $a$ for different measurement settings $x$, via the distribution $q(a \vert x, a')$. The post-processing $ q(a \vert x, a') $ can w.l.o.g. be chosen to be deterministic, by possibly enlarging the number of outcomes of the parent \ac{POVM} $\left\lbrace M'_{a'} \right\rbrace_{a'}$ and shifting all the randomness from $ q(a \vert x, a') $ into the the parent \ac{POVM} \cite{Ali2009, RevModPhys.95.011003}, in direct analogy to Fine's theorem \cite{PhysRevLett.48.291}. More formally, we can make the identification 
\begin{align}
\label{Eq:DeterministicIncompatibility}
q(a \vert x,a') = \sum_{a'' = 1}^{N_{\mathrm{Det}}} q(a'' \vert a')  g(a \vert x, a''),    
\end{align}
which allows us to re-define the parent \ac{POVM} as $ M{''}_{a''} := \sum_{a'}  q(a'' \vert a') M'_{a'}$. Here, $g(a \vert x, a'')$ is one of the $N_{\mathrm{Det}} = k^m$ deterministic input-output assignments (strategies) for an assemblage of $m$ measurements with $k$ outcomes each.

Conversely, measurement incompatibility has a second, equivalent, characterization in terms of the marginalization condition
\begin{align}
\label{Eq:IncompatiblityDefAlternative}
\sum_{\vec{a} \backslash a_x} M'_{\vec{a}} = M_{a \vert x}, \ \forall a,x
\end{align}
where $\left\lbrace M'_{\vec{a}} \right\rbrace_{\vec{a}}$ is a parent \ac{POVM} with outcomes $ [a_{x=1}, a_{x=2}, \cdots, a_{x=m}] $ with one \emph{outcome-channel} corresponding to each setting. The equivalence between the definitions in Eq.~\eqref{Eq:IncompatiblityDef} and Eq.~\eqref{Eq:IncompatiblityDefAlternative} follows from the fact that marginalization is a special kind of post-processing and for the other direction, one can use explicitly $ M'_{\vec{a}} \coloneqq \sum_{a'} \big[ \Pi_x p (a_x \vert x, a') \big] M'_{a'} $ \cite{Ali2009}. \\
\begin{figure}[t]
\begin{center}
\includegraphics[scale=0.7]{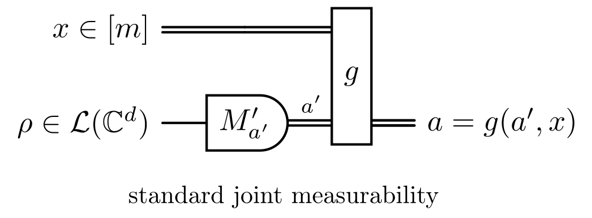}
 \caption{Operational illustration of measurement incompatibility. If $m$ measurement boxes are available, only one of them can be performed at a time. Nevertheless, for jointly measurable assemblages $\mathcal{M}$, there exists a single measurement in form of a parent \ac{POVM} $ \left\lbrace M'_{a'} \right\rbrace_{a'} $, which gives simultaneous access to the outcomes $\left\lbrace a_x \right\rbrace_x$ for all $ x \in [m] \coloneqq \left\lbrace 1,2, \cdots, m \right\rbrace $ upon post-processing the outcome $a'$ with a deterministic function $g$.}
  \label{Fig:MeasurementIncompatiblity}
\end{center}
\end{figure}
\indent While we will work primarily with the definition of joint measurability in form of Eq.~\eqref{Eq:IncompatiblityDef}, the definition in form of Eq.~\eqref{Eq:IncompatiblityDefAlternative} is useful for illustrating examples. Note further that it follows directly from  Eq.~\eqref{Eq:IncompatiblityDefAlternative} and  Eq.~\eqref{Eq:DeterministicIncompatibility} that the question whether a given assemblage $ \mathcal{M} $ is incompatible, can be solved via a \ac{SDP}, a special instance of convex optimization that is efficiently solvable \cite{boyd_vandenberghe_2004}. Joint measurability can be seen as a generalization of the concept of commutativity from projective measurements to general \acp{POVM}. In fact, if $\mathcal{M}$ describes projective measurements, (in)compatibility reduces to (non-)commutativity. As an illustrative example (presented for instance in \cite{RevModPhys.95.011003}), let us consider noisy Pauli measurements given by
\begin{align}
M^{\eta}_{\pm \vert 1} = \dfrac{1}{2}(\mathds{1} \pm \eta \sigma_x), \ M^{\eta}_{\pm \vert 2} = \dfrac{1}{2}(\mathds{1} \pm \eta \sigma_z),
\end{align}
where $\sigma_x,\sigma_z$ are the Pauli matrices in the $x,z$ direction and $\eta \in [0,1]$ describes the visibility of the measurements. Clearly, for $\eta = 0$ the measurements are jointly measurable, while they are incompatible for $\eta = 1$ by the noncommutativity of projective measurements. To get the exact threshold, let us consider a candidate parent \ac{POVM}, given by
\begin{align}
M^{'\eta}_{i,j} = \dfrac{1}{4}(\mathds{1}+\eta(i \sigma_x + j \sigma_z)),
\end{align}
where $i,j = \pm 1$. It can directly be verified that it holds $M^{\eta}_{\pm \vert 1} = \sum_j M^{' \eta}_{i,j}$ and $M^{\eta}_{\pm \vert 2} = \sum_i M^{' \eta}_{i,j}$ and $\sum_{ij,} M^{' \eta}_{i,j} = \mathds{1}$. However, $M^{' \eta}_{i,j} \geq 0$ is only true for $ \eta \in [0, \tfrac{1}{\sqrt{2}}]$. Hence, $\left\lbrace M^{' \eta}_{i,j} \right\rbrace_{i,j} $ is a parent \ac{POVM} for the noisy $\sigma_x,\sigma_z$ for any $ \eta \in [0, \tfrac{1}{\sqrt{2}}]$. It can be shown (using for instance \ac{SDP} techniques) that the above parent \ac{POVM} is optimal, i.e., $ \mathcal{M}^{\eta} $ becomes incompatible for $ \eta > \tfrac{1}{\sqrt{2}} $ \cite{Heinosaari2016, RevModPhys.95.011003}.\\ 
\indent Throughout the manuscript, we will use the visibility $\eta$ to quantify the robustness of a given assemblage $\mathcal{M}$ with respect to different subsets of the set of all quantum measurements, corresponding to the different models depicted in Fig.~\ref{Fig:Circuits}. That is, we are interested when the noisy assemblage $\mathcal{M}^{\eta}$ defined via
\begin{align}
\label{Eq;NoiseModel}
M^{\eta}_{a \vert x} = \eta M_{a \vert x} + (1-\eta) \Tr[M_{a \vert x}] \dfrac{\mathds{1}}{d},   
\end{align}
is contained in the sets appearing in Eq.~\eqref{Eq:MainResult}, which will be formally introduced in Sec.~\ref{Sec:Revisiting}. Joint measurability and measurement incompatibility as defined via Eq. \eqref{Eq:IncompatiblityDef} are naturally well-defined for more than $2$ measurements. However, by only looking at the incompatibility of $n > 2$ measurements, many of the more detailed properties of $\mathcal{M}$ are not captured and the physical insights are somewhat limited. To exemplify this, let us consider the $m=3$ noisy Pauli measurements $\mathcal{M} = \left\lbrace \lbrace  M^{\eta}_{\pm \vert s} \rbrace_{\pm} \right\rbrace_{s} = \left\lbrace \lbrace \dfrac{1}{2}(\mathds{1} \pm \eta \sigma_s) \rbrace_{\pm} \right\rbrace_s$ for $s \in \left\lbrace x,y,z \right\rbrace$. \\
\indent When we investigate the incompatibility properties of $\mathcal{M}$ we could simply check Eq.~\eqref{Eq:DeterministicIncompatibility}. However, one could also ask when only a subset of two measurements in $\mathcal{M}$ are jointly measurable. Even more, and at the core of this work, one could be interested in making a statement about for which visibility $\eta$ constitutes the assemblage $\mathcal{M}$ \emph{genuinely} $m=3$ measurements and when can it be reduced to effectively two measurements (note that the one measurement case corresponds exactly to the definition of measurement incompatibility). To answer this question, different approaches have been made in the past to generalize measurement incompatibility, such that more of the inherent structure of an assemblage $\mathcal{M}$ is captured and a classification into \emph{genuinely} $m$ measurements can be made. 

\section{Revisiting the different generalizations of measurement incompatibility}
\label{Sec:Revisiting}

\indent If one whishes to consider a more general notion of joint measurability or measurement incompatibility, there are several different models that have been proposed and studied in the literature so far, i.e., there does not exist one unique generalization. The approaches to generalize measurement incompatibility that exist so far in the literature can be categorized roughly into three categories: Measurement simulability \cite{Guerini2017}, compatibility structures \cite{PhysRevLett.123.180401}, and multi-copy compatibility \cite{Carmeli2016}. All of these have in common that they do not only capture in more details the properties of a measurement assemblage $ \mathcal{M} $, but can also be used to infer how many measurements are minimally necessary to reproduce  $ \mathcal{M} $ in some operationally meaningful way. However, as we will show, all these generalizations differ not only in the operational interpretation, but also in the mathematical and consequentially physical characterization. We introduce the three main generalizations of measurement incompatibility (and special instances thereof) in the following.  

\subsection{Measurement simulability}
\label{SubSection:Simulability}
The notion of measurement simulability was first introduced and thoroughly studied in \cite{Guerini2017}. It is based around the idea that the ability to perform some measurements $ \mathcal{M}' = \left\lbrace \left\lbrace M'_{a'|x'} \right\rbrace_{a'} \right\rbrace_{x'=1, \cdots, n}  $ allows one to simulate a wider class of measurements by classical pre-processing and classical post-processing. More formally, according to \cite{Guerini2017}, we say that the assemblage $ \mathcal{M}' $ can simulate the assemblage $ \mathcal{M} = \left\lbrace \left\lbrace M_{a|x} \right\rbrace_a \right\rbrace_{x = 1, \cdots,m}  $ if there exist probabilities $p(x'|x)$ and $q(a|x,a')$ such that
\begin{align}
\label{Eq:SimuablityDef}
M_{a \vert x} = \sum_{x'=1}^n p(x' \vert x) \sum_{a'}  M'_{a'|x'} \ q(a|x,a'), \ \forall a,x.
\end{align}
Note that this notion of simulability has a clear operational interpretation (see also Fig.~\ref{Fig:MeasurementSimulability}) and is entirely based on classical pre-and post-processing. That is, upon receiving the information that measurement $x$ of $\mathcal{M}$ should be simulated, we perform measurement $x'$ of the assemblage $\mathcal{M}'$ with probability $p(x' \vert x)$. Upon receiving outcome $a'$, we then output outcome $a$ according to the distribution $q(a|x,a')$. It is important to note here, that compared to the notation chosen in \cite{Guerini2017}, we do not condition the post-processing $q(a|x,a')$ specifically also on $x'$, as the information about $x'$ can be encoded into $a'$, since there is no limitation in the number of outcomes. \\
\indent The model is motivated from a perspective of having only access to very limited experimental resource. i.e., being only able to perform the measurements $\mathcal{M}'$ (for instance, one is restricted to certain measurement directions due to limitations in a lab). From the perspective of only being able to perform $\mathcal{M}'$, the definition of simulability in Eq.~\eqref{Eq:SimuablityDef} is already fairly general. However, in some works \cite{PhysRevLett.122.130403, 2412.15615}, a slightly more general definition of simulability was chosen, in which classical randomness between the pre and post-post processing is also allowed. More formally, it means one has access to a random variable $ \Lambda $ which takes on the values $\lambda$ with probability $\pi(\lambda)$ such that the \emph{randomness-assisted} simulability model becomes:
\begin{align}
\label{Eq:SimuablityDefRandomness}
M_{a \vert x} = \sum_{\lambda} \pi(\lambda) \sum_{x'=1}^n p(x' \vert x, \lambda) \sum_{a'}  M'_{a'|x'} \ q(a|x,a', \lambda), \ \forall a,x.
\end{align}
\begin{figure}[t]
\begin{center}
\includegraphics[scale=0.5]{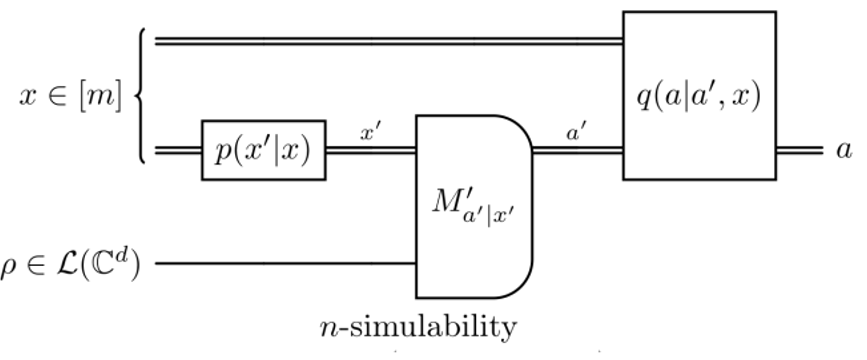}
 \caption{Operational interpretation of measurement $n$-simulability. Given that measurement setting $x$ of $\mathcal{M}$ should be simulated, a measurement $x'$ from the simulating assemblage $ \mathcal{M}' $ is chosen probabilistically. Upon obtaining the outcome $a'$ from the measurement $ \mathcal{M}'_{x'} $, a final outcome $a$ is produced probabilistically, taking the knowledge of $x,a'$ into account. Note that, as there is no limitation on $a'$, the outcome can also encode the choice of measurement $x'$, hence listing it specifically as an argument in the post-processing $q$ would be redundant.}
  \label{Fig:MeasurementSimulability}
\end{center}
\end{figure}
Notably, there is still no dependence of the operators $ M'_{a'|x'} $ on $\lambda$, as we consider $\mathcal{M}' = \left\lbrace \left\lbrace M'_{a'|x'} \right\rbrace_{a'} \right\rbrace_{x'}$ as given. Note further that the question of whether $ \mathcal{M} $ can be simulated by $ \mathcal{M'} $
can a priori not be decided by a linear program, as the probabilities $p(x' \vert x)$(respectively $ p(x' \vert x, \lambda) $) and $ q(a|x,a') $ (respectively $ q(a|x,a', \lambda) $) are variables, leading to non-linear constraints. \\ 
\indent The model of measurement simulability can also be used in situations where the simulating assemblage $ \mathcal{M}' $ is not given, but only some of its properties are fixed. For us most importantly, we can consider a model where $ \mathcal{M}' $ is not given, but it is fixed to contain at most $n$ measurements. In that case, we consider whether an assemblage $ \mathcal{M} $ is \emph{n-simulable}. More formally, one can ask whether $ \mathcal{M} = \left\lbrace \left\lbrace M_{a \vert x} \right\rbrace_a \right\rbrace_x $ can be simulated by $n$ measurements by checking that there exist probabilities $p(x' \vert x)$, $q(a|x,a')$, and \acp{POVM} $\left\lbrace \left\lbrace M'_{a' \vert x'} \right\rbrace_{a'} \right\rbrace_{x' = 1, \cdots, n}$ such that Eq.~\eqref{Eq:SimuablityDef} holds true (or Eq.~\eqref{Eq:SimuablityDefRandomness} in the randomness-assisted verison). In the particular case where $ \mathcal{M}$ contains $m = 3$ measurements, and $x' \in \left\lbrace 1,2 \right\rbrace$ one needs to check that
\begin{align}
\label{Eq:Simuablity2Def}
M_{a \vert x} = \sum_{x'=1}^2 p(x'\vert x) \sum_{a'}  M'_{a'|x'} q(a|x,a'),
\end{align}
holds true for some $p(x' \vert x)$, $q(a|x,a')$, and $ M'_{a'|x'} $. We denote by $ \mathrm{SIM}_{n} $ the set of all assemblages $\mathcal{M}$ that are $n$-simulable. Furthermore, we define $ \mathrm{SIM}^{\lambda}_{n} $
 to be the set of all \emph{randomness assisted} $n$-simulable assemblages. \\
\indent Note that measurement simulability is clearly a generalization of measurement incompatibility, as we recover it by considering only a single simulating measurement, i.e., $ n = 1 $. One particular restriction to measurement simulability, which will be of particular use to us, is the case where the pre-processing is limited to deterministic strategies, i.e., deterministic functions $x' = f(x)$.  In this case, we will refer to "deterministic measurement simulability". \\
\indent While physically well-motivated, measurement simulability comes with the disadvantage of being hard to analyze, as it is a priori non-linear and even more crucially its definition may not define a convex set, as indeed confirmed by  Result~\ref{result2-nontechnical}. To gain some additional insights on measurement simulability, note that we can rewrite the definition of measurement simulability according to Eq.~\eqref{Eq:SimuablityDef} (for simplicity here with only $n=2$ simulating measurements, but it straightforwardly generalizes) in the following convenient way
\begin{align}
M_{a \vert x} &= \sum_{x' = 1,2} p(x' \vert x) \sum_{a'} q(a \vert x,a') M'_{a' \vert x'} \nonumber \\
&= \sum_{x' = 1,2} p(x' \vert x) F^{\mathrm{JM}, (x')}_{a \vert x},\label{JMAssemblageFormSimulation}
\end{align}
where $ F^{\mathrm{JM}, (x')}_{a \vert x} = \sum_{a'} q(a \vert x,a') M'_{a' \vert x'} $ describes a jointly measurable assemblage $\mathcal{F}^{\mathrm{JM}, (x')}$. \\
\indent While the above expression looks similar to a convex combination of jointly measurable assemblages on first sight, it is quite different from that. In order to simulate $\mathcal{M}$, we are allowed to chose from two measurement assemblages $\mathcal{F}^{\mathrm{JM}, (1)}$ and $\mathcal{F}^{\mathrm{JM}, (2)}$ which are each jointly measurable. However, note that there will be measurements in $\mathcal{F}^{\mathrm{JM}, (1)}$ that are not jointly measurable with measurements in $\mathcal{F}^{\mathrm{JM}, (2)}$. Furthermore, we are allowed to change the mixing of measurements from those assemblages depending on the setting $x$ that we want to simulate (via $p(x' \vert x)$), contrary to a convex mixture where this term would be constant for all $x'$. Note that the above characterization in Eq.~\eqref{JMAssemblageFormSimulation} has already been made in the work~\cite{Guerini2017} (Eq.~$(15)$ therein). \\
\indent Let us finish this sub-section by giving an illustrative example of measurement simulability. To that end, we consider projective measurements corresponding to the $\sigma_x,\sigma_z, H$ operator where $H = \tfrac{1}{\sqrt{2}}\begin{pmatrix}
1 & \phantom{-}1 \\
1 & -1
\end{pmatrix}$ is the Hadamard gate operator. As the $H$ operator lies ``in between'' $  \sigma_x $ and $ \sigma_z $, it might be interesting to know how well all $3$ measurements can be simulated by only performing two measurements. A heuristic numerical optimization shows that the assemblage allows for a
 visibility of (at least) $ \eta_{\mathrm{prob}} \approx 0.8150$. For this, we chose the pre-processing $p(1 \vert 1) = 0,  p(1 \vert 2) = 1, p(1 \vert 3) = \tfrac{1}{2}$, with the other probabilities given by normalization. This seems to be optimal, which is very intuitive, as we use $\sigma_x$ and $\sigma_z$ measurements to simulate themselves and use a convex combination of both directions to simulate the $ H $ measurement. 

\subsection{Compatibility structures}

The notion of measurement incompatibility has been generalized in a different direction by considering so-called compatibility structures \cite{PhysRevLett.123.180401}. For instance, more detailed structures of incompatibility, such as pairwise compatibility and genuine triplewise incompatibility, beyond the standard definition of incompatibility have been introduced and studied (see Fig.~\ref{Fig:CompatiblityStructuresGeometrical}). We introduce these compatibility structures first for an assemblage $\mathcal{M}$ in the case of $m=3$ measurements and consider its generalizations to a general $m$ afterward. \\
\indent Let us define the sets $\mathrm{JM}^{(s,t)}$ with $s,t \in \left\lbrace 1,2,3 \right\rbrace$ such that $s \neq t$ as the sets containing assemblages of $m=3$ measurements in which the measurement $s$ and $t$ are jointly measurable. Hence, any assemblage $\mathcal{M} \in \mathrm{JM}^{(s,t)}$ can be seen as effectively $2$ measurements, as there exists a parent \ac{POVM} simultaneously performing the measurements $(s,t)$. The intersection of the sets $\mathrm{JM}^{\mathrm{pair}} \coloneqq \mathrm{JM}^{(1,2)} \cap \mathrm{JM}^{(1,3)} \cap \mathrm{JM}^{(2,3)}$ contains all assemblages in which any pair of two measurements are compatible, the so-called \emph{pairwise} compatible assemblages. On the other hand, the set $\mathrm{JM}_2^{\mathrm{conv}} \coloneqq \mathrm{Conv}(\mathrm{JM}^{(1,2)},\mathrm{JM}^{(1,3)}, \mathrm{JM}^{(2,3)})$ describes the convex hull of the sets $\mathrm{JM}^{(1,2)}$, $\mathrm{JM}^{(1,3)}$, and $\mathrm{JM}^{(2,3)}$, i.e., it contains all assemblage that can be written as a convex combination of assemblages where one pair of measurements is compatible. More formally, it contains all assemblages of the form
\begin{align}
\mathcal{M} = p_{(1,2)} \mathcal{J}^{(1,2)} + p_{(1,3)} \mathcal{J}^{(1,3)}+p_{(2,3)} \mathcal{J}^{(2,3)}, \label{ConvexCombOfIncompStructures}
\end{align}
where $\mathcal{J}^{(s,t)} \in \mathrm{JM}^{(s,t)}$ and the convex combination is to be understood on the level of the individual \ac{POVM} effects. The set $ \mathrm{JM}_2^{\mathrm{conv}} $, which we will also refer to as set of $2$-wise compatible assemblages, can be understood as follows. It contains all assemblages $\mathcal{M}$ that can probabilistically be seen as effectively $2$ measurements, across a certain \emph{compatibility cut}, similarly to the notion of bi-separable states in entanglement theory. Geometrically, the set $ \mathrm{JM}_2^{\mathrm{conv}} $ can be understood as the largest set depicted in Fig.~\ref{Fig:CompatiblityStructuresGeometrical}. 
\begin{figure}[t]
\begin{center}
\includegraphics[scale=0.35]{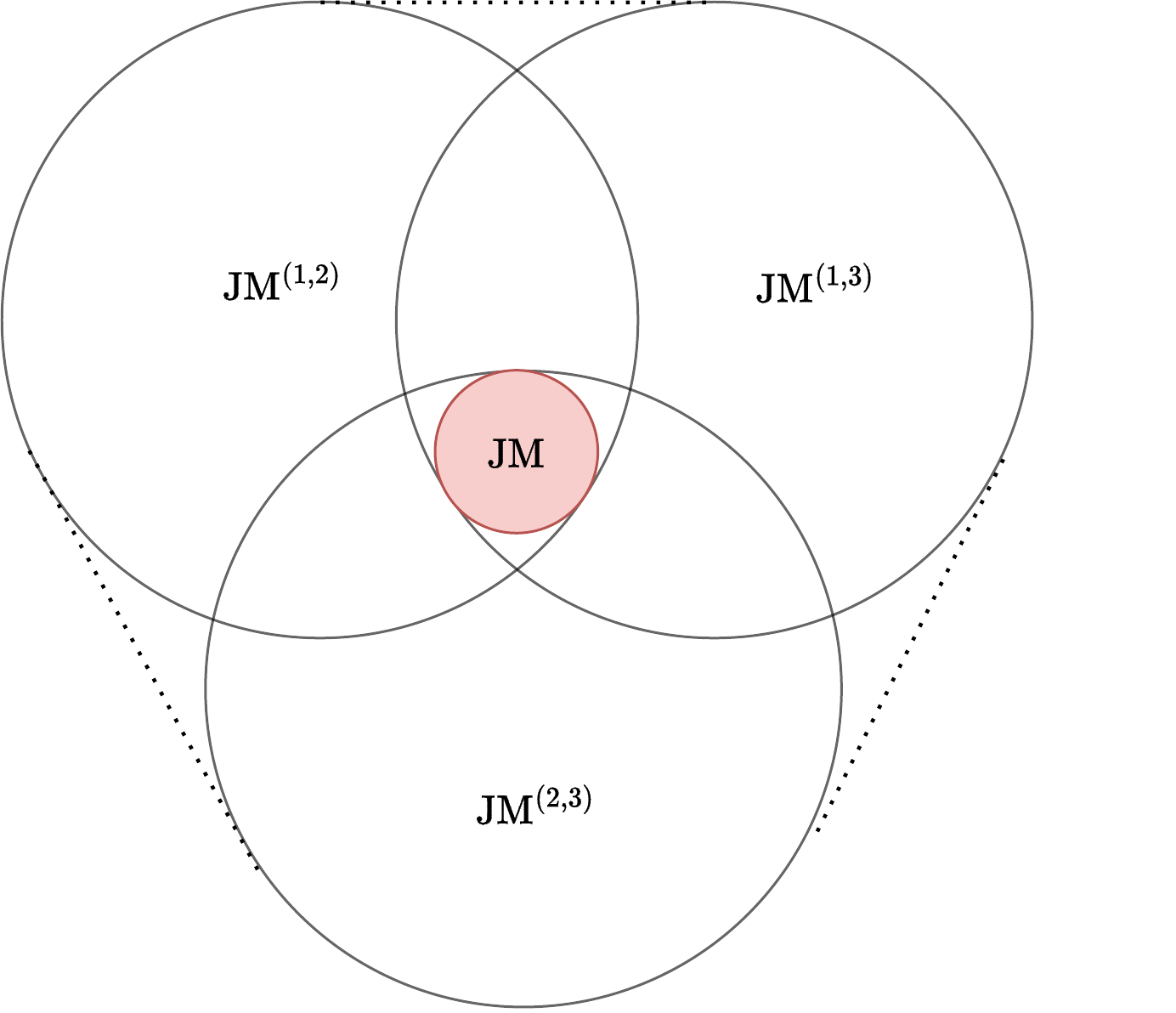}
 \caption{Geometrical representation of the compatibility structures introduces in Ref.~\cite{PhysRevLett.123.180401}. The sets $ \mathrm{JM}^{(s,t)}$ contain assemblages in which the pair $(s,t)$ is compatible. In the intersection of the sets $\mathrm{JM}^{(1,2)}$, $\mathrm{JM}^{(2,3)}$, $\mathrm{JM}^{(1,3)}$ lies the set $\mathrm{JM}$ as standard joint measurability as a proper subset. The convex hull $\mathrm{JM}_2^{\mathrm{conv}} $ of the sets $\mathrm{JM}^{(1,2)},\mathrm{JM}^{(1,3)}, \mathrm{JM}^{(2,3)}$ corresponds to the set of $2$-wise compatible measurements.}
  \label{Fig:CompatiblityStructuresGeometrical}
\end{center}
\end{figure} \\
Finally an assemblage $\mathcal{M} \notin \mathrm{JM}_2^{\mathrm{conv}}$ is said to be \emph{genuinely triplewise incompatible}. More precisely, the idea is that any assemblage that is not contained in the sets $\mathrm{JM}^{(1,2)}, \mathrm{JM}^{(2,3)}, \mathrm{JM}^{(1,3)}$ and even convex combinations of assemblages from these sets must necessarily be an assemblage of genuinely three measurements, i.e., it cannot be seen as probabilistically $2$ measurements. \\
\indent To get more familiar with the concept of compatibility structures, we take a look at the example given \cite{PhysRevLett.123.180401}. We consider again the three projective measurements $\mathcal{M}^{\eta}= \left\lbrace \left\lbrace M^{\eta}_{a \vert x} \right\rbrace_a \right\rbrace_x$ which represent the Pauli $\sigma_x,\sigma_y,\sigma_z$ observables subjected to depolarizing noise, i.e., we analyze the incompatibility of the assemblage $\mathcal{M}^{\eta}$ defined via
\begin{align}
M^{\eta}_{\pm \vert s} = \dfrac{1}{2}(\mathds{1} \pm \eta \sigma_s), \label{DepolarizedPaulis}
\end{align}
where $s \in \left\lbrace x,y,z \right\rbrace$ and $\eta$ is the visibility. It was shown in \cite{PhysRevLett.123.180401} that $\mathcal{M}^{\eta} \in \mathrm{JM}_2^{\mathrm{conv}}$ iff $\eta  \leq  \tfrac{\sqrt{2}+1}{3} \approx 0.80$. Thus, for $ \eta \in \left[\tfrac{1}{\sqrt{2}}, \tfrac{\sqrt{2}+1}{3}\right] $, the assemblage  $\mathcal{M}^{\eta}$ is not pairwise compatible for any pair, i.e., 
$\mathcal{M}^{\eta} \notin \mathrm{JM}^{(1,2)}, \mathrm{JM}^{(2,3)}, \mathrm{JM}^{(1,3)}$ while it is also not genuinely triplewise incompatible. \\
\indent Interestingly, the simulation strategy used in \cite{PhysRevLett.123.180401} to reach the critical visibility of $\eta = \tfrac{\sqrt{2}+1}{3}$ relies on convex combinations that include noise-free Pauli measurements in $\sigma_x,\sigma_y,\sigma_z$ direction. This raises the question about the role of the free randomness in terms of convex combinations. On the one hand it is very natural to assume these to be free. However, from a strict preparation-inspired point of view, the possibility to perform a measurement in a certain direction probabilistically means that we need to be prepared to perform this measurement in a lab. Meaning an experimentalist that wants to simulate the statistics of the assemblage $\mathcal{M}^{\eta}$ with $ \eta = \tfrac{\sqrt{2}+1}{3} $ according to the strategy in \cite{PhysRevLett.123.180401} needs to be prepared to perform $\sigma_x,\sigma_y,\sigma_z$ measurements to simulate noisy versions of them. \\ 
\indent While the authors in \cite{PhysRevLett.123.180401} considered general compatibility structures, represented by arbitrary hypergraphs, we focus here the notion of $n$-wise compatibility  as a generalization of $2$-wise compatibility (i.e., not being genuine triplewise incompatible). Interestingly, this requires a slight generalization of the definitions made in \cite{PhysRevLett.123.180401}, whenever $m \neq n+1$. For simplicity, let us first consider the case where we have an assemblage $ \mathcal{M} $ of $m=n+1$ measurements and we are interested in its $n$-wise compatibility. Then, using a notation similar to that from \cite{2408.08347}, we say an assemblage $\mathcal{M}$ is $n$-wise compatible if it does admit a decomposition 
\begin{align}
\label{DefnGenuineIncomp-m-wise}
M_{a \vert x} =   \sum_{(t_1,t_2 \neq t_1) \in \mathcal{T}}   p_{(t_1,t_2)}  J_{a \vert x}^{(t_1,t_2)},
\end{align}
where $\left\lbrace \left\lbrace J_{a \vert x}^{(t_1,t_2)} \right\rbrace_a \right\rbrace_{x = 1, \cdots, m} \in \mathrm{JM}^{(t_1,t_2)}  $ are assemblages of $m$ measurements in which the measurements $(t_1,t_2)$ are jointly measurable and $p_{(t_1,t_2)}$ denotes the probability with which the assemblage $\mathcal{J}^{(t_1,t_2)}$ is employed. Here, the set $\mathcal{T}$ contains all tuples $ (t_1,t_2 \neq t_1) $ constructed from the integers $[m] = \left\lbrace 1, \cdots, m \right\rbrace$.

\begin{figure}[h]
\begin{center}
\includegraphics[scale=0.5]{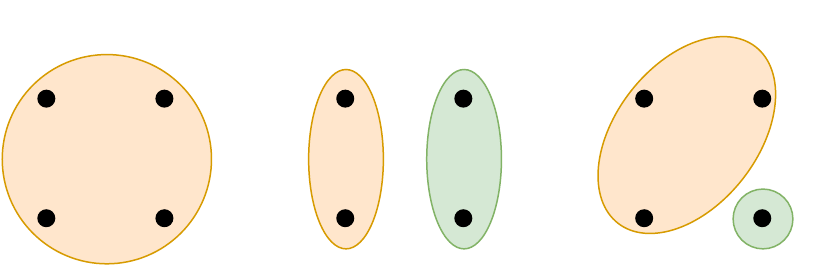}
 \caption{Example of the possible compatibility hypergraphs to simulate $m=4$ measurements with $n=2$ measurements, according to the approach of compatibility structures. Each dot represents one measurement. We omitted labelings of the measurements here, as these hypergraphs only represent all possibilities up to permutations of measurements. All measurements that are contained in an hyper-edge are jointly measurable (and incompatible otherwise). The scenario on the very left corresponds to standard joint measurability, where one of the parent \acp{POVM} is simply ignored. In the middle, two parent \acp{POVM} are used to simulate two measurements each. The case on the right represents the scenario where $3$ measurements are jointly measurable, while the remaining measurement is simulated by itself. To define $n$-wise compatibility, all possible of such hypergraph coverings with splitting the graph in (at most) $n$ sub-graphs have to be considered and convex combinations between different structures be allowed.}
  \label{Fig:CompatibilityHyperGraphsExample}
\end{center}
\end{figure} 

\indent While it is straightforward to define $n$-wise compatibility in the case $n = m-1$ through Eq.~\eqref{DefnGenuineIncomp-m-wise}, it becomes more subtle for a general $n < m$. To exemplify this, let us consider the situation when $m=4$ and $n=2$ (see also Fig.~\ref{Fig:CompatibilityHyperGraphsExample}). That is, we have $2$ parent \acp{POVM} available to (deterministically) simulate $4$ measurements. This can happen in two non-trivial ways: First, $3$ out of the $4$ measurements might be simulated by a single parent \ac{POVM}, while the remaining measurement is simulated by the second parent \ac{POVM} (which can be the measurement itself). Second, the $4$ measurements might be split up into two disjunct pairs of two measurements, each of which is simulated separately by a parent \ac{POVM}. Interestingly, these two ways to simulate (deterministically) $4$ measurements with $2$ measurements are, a priori, not comparable, i.e., it is not obvious that one strategy would be strictly stronger than the other. Note that there also exist a third (but trivial) way to simulate $4$ measurements by $2$, which is to fully ignore one of the available parent \acp{POVM}, reducing the strategy to standard joint measurability. \\
\indent To formalize $n$-wise compatibility in the following, we invoke the language of compatibility hypergraphs \cite{PhysRevLett.123.180401}, which we will slightly generalize here. In a compatibility hypergraph, each measurement $x$ represents a vertex, while any collection of hyper-edges $C_i$ represents the sets of measurements that are compatible (in the sense of the largest collection of possible hyper-edges, i.e., any measurement outside the hyper edge is not compatible with those contained in the hyper-edge). For example, in the above case, $C = [(1,2,3),(4)]$ represents the situation in which the measurements $x=1,2,3$ are jointly measurable and measurement $x=4$ is trivially jointly measurable with itself. On the other hand, $\tilde{C} = [(1,2),(3,4)]$ represents the situation where the measurements $x=1,2$ and simultaneously the measurements $ x = 3,4 $ are jointly measurable. In total, all allowed hyper-edge collections for the case of $2$-wise compatibility of $m = 4$ measurements is then given by  
\begin{align}
\mathcal{C} = \left\lbrace C_i \right\rbrace_i = \big\lbrace [(1,2,3),(4)], [(1,2,4),(3)], [(1,3,4),(2)], [(2,3,4),(1)], \nonumber \\
[(1,2),(3,4)], [(1,3),(2,4)], [(1,4),(2,3)] \big \rbrace,     
\end{align}
which again trivially includes the case $ [(1,2,3,4)] $ of full joint measurability. Then, we say $\mathcal{M}$ is $2$-wise compatible if it can be written as a convex combination over all allowed hyper-edge collections, i.e.,
\begin{align}
M_{a \vert x} = \sum_i p_i J^{C_i}_{a \vert x}.   \label{Def:Genuine-n-wise-comp-structures}
\end{align}
Note that while the definition of $2$-wise compatibility looks formally the same as those in the Supplemental Material of \cite{PhysRevLett.123.180401}, it is slightly more general, as here each of the $C_i$ describes a full characterization of the existing incompatibilities, while in \cite{PhysRevLett.123.180401} they simply describe a compatibility hyper-edge (and not a collection thereof). Now, in the most general case, we say that an assemblage $\mathcal{M}$ of $m$ measurements is $n < m $-wise jointly measurable if $\mathcal{M}$ obeys a decomposition like in Eq.~\eqref{Def:Genuine-n-wise-comp-structures} where each of the hyper-edge collections $C_i$ sub-divides the $m$ measurements in (at most) $n$ disjunct sub-sets.

\subsection{Multi-copy incompatibility}

The final model, as a generalization of measurement incompatibility, that we consider is inspired by the idea that two measurements $ \mathcal{M}_1 = \left\lbrace M_{a \vert 1} \right\rbrace_a $ and $ \mathcal{M}_2 = \left\lbrace M_{a \vert 2} \right\rbrace_a $ are only incompatible because we demand the statistics to originate from just a single copy of a state $\rho$. If one instead allows for multiple copies, say $n=2$ copies of $\rho$, any measurements $\mathcal{M}_1,\mathcal{M}_2$ are trivially compatible, as they can simply be performed on different copies of the same state. However, this does not exclude the possibility of assemblages with $m \geq 3$ measurements to still be incompatible, even on $2$ copies of $\rho$. This approach in generalizing incompatibility does strictly speaking not directly quantify how many measurements are genuinely needed to implement some assemblage $\mathcal{M}$ directly, it counts instead the number of quantum states that is necessary. More precisely, following \cite{Carmeli2016}, an assemblage $\mathcal{M}$ is said to be $n$-copy jointly measuarble if it holds 
\begin{align}
\label{Eq:DefinitionKcopy}
\mathrm{Tr}[M_{a \vert x} \rho] = \sum_{a'} p(a \vert x, a') \mathrm{Tr}[M'_{a'} \rho^{\otimes n}] \ \ \forall \rho,
\end{align}
where $\left\lbrace M'_{a'} \right\rbrace_{a'}$ is now a parent \ac{POVM} acting on $n$-copies of a state $\rho$ jointly.  The concept of $n$-copy compatibility is operationally depicted in Fig.~\ref{Fig:KCopyIncompatibility}. 
\begin{figure}[t]
\begin{center}
\includegraphics[scale=0.6]{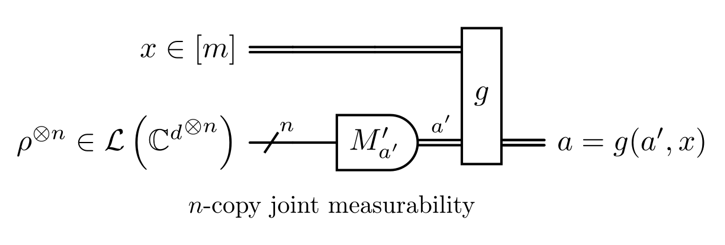}
 \caption{Operational interpretation of multi-copy joint measurability. The measurement box gets $n$ copies as a quantum input  and is allowed to perform a distributed (possibly entangled) measurement on all $n$ of $\rho$ to reproduce the statistics of a given assemblage $\mathcal{M}$.}
  \label{Fig:KCopyIncompatibility}
\end{center}
\end{figure} \\
Note that any assemblage $\mathcal{M}$ of $m$ measurements is $m$-copy jointly measurable, simply by performing one measurement each per copy. For the $3$ noisy Pauli measurements, it can be shown that the critical visibility which still allows them to be measured jointly on two copies of any state $\rho$ is given by $\eta = \sqrt{3}/2$ . 
We denote by $ \mathrm{Copy}_n $ the set of all assemblages that are jointly measurable on $n$ copies. \\
\indent Note that while it is not directly clear from Eq.~\eqref{Eq:DefinitionKcopy}, it follows from~\cite{Carmeli2016} (see also~\cite{PhysRevA.109.062203}) that the problem of deciding whether a given assemblage $\mathcal{M}$ is jointly measurable on $n$ copies, can be decided via an \ac{SDP}. Note further that $n$-copy joint measurability is intuitively directly related to the notion of cloning, or rather the inability to clone quantum systems perfectly. If one would have the ability to clone a quantum state $\rho$ perfectly, one simply could perform the measurements on the output of this cloning machine on each copy of $\rho$.

\section{Results}
\label{Sec:Results}

In this section, we will give a formal statement of our results and prove Result~\ref{result1-nontechnical} to Result~\ref{result5-nontechnical}, leading to the hierarchy of set inclusions given in Eq.~\eqref{Eq:MainResult}. Our results allow for a simultaneous comparison of all the measurement incompatibility generalizations revisited in Sec.~\ref{Sec:Revisiting}. For the sake of clarity, we focus in the presentation on the first non-trivial scenario, i.e, assemblages $\mathcal{M}$ of $m=3$ settings and $n =2 <m $ but do provide a proof for the general case where necessary.

\subsection{Deterministic vs. probabilistic simulability}

Our first result concerns the notion of measurement simulability as introduced in Eq.~\eqref{Eq:SimuablityDef}. In particular, we focus here on a simulability process in which the pre-processing, i.e., the probabilities $ p(x' \vert x) $ to simulate measurement $x$ using setting $x'$ are deterministic. That is, the probabilities $ p(x' \vert x) \in \left\lbrace 0, 1 \right\rbrace $ reduce to  $x' = f(x)$, where $f:[m]\to[n]$ is a function. Note again  that the  post-processing can be chosen to be deterministic w.l.o.g. (see Section~\ref{Sec:Preliminaries} and Section~\ref{Sec:Revisiting}). Let us denote by $ \mathrm{SIM}^{\mathrm{Det}}_{n} $ the set of all assemblages simulable by $n$ measurements with deterministic pre-processing. Before proving that $ \mathrm{SIM}^{\mathrm{Det}}_{n} $ is a strict subset of all $n$-simulable assemblages, i.e., $ \mathrm{SIM}^{\mathrm{Det}}_{n} \subset  \mathrm{SIM}_{n}$ we first make the following useful observation. 

\begin{observation} [Simulability with deterministic pre-processing]
\label{observation1}
Consider the measurement simulability process as described in Eq.~\eqref{Eq:SimuablityDef} for an assemblage $\mathcal{M}$ with $3$ measurement settings, i.e., $x \in \left\lbrace 1,2,3 \right\rbrace$. There exists a $2$-simulability strategy with deterministic pre-processing for $\mathcal{M}$, i.e., $\mathcal{M} \in   \mathrm{SIM}^{\mathrm{Det}}_{2} $ if and only if $\mathcal{M}$ is contained in the union of the sets $ \mathrm{JM}^{(1,2)}, \mathrm{JM}^{(1,3)}, \mathrm{JM}^{(2,3)}$ defined in the context of compatibility structures according to Eq.~\eqref{ConvexCombOfIncompStructures}. 
\end{observation}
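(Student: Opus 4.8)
The plan is to unfold the definition of $\mathrm{SIM}^{\mathrm{Det}}_2$ for $m=3$ and reduce both implications to the elementary fact that two measurements admit a common parent \ac{POVM} if and only if they are jointly measurable, i.e.\ Eq.~\eqref{Eq:IncompatiblityDef}. First I would fix notation: a deterministic $2$-simulation of $\mathcal{M}$ consists of a function $f:\{1,2,3\}\to\{1,2\}$, two simulating \acp{POVM} $\{M'_{a'|1}\}_{a'}$ and $\{M'_{a'|2}\}_{a'}$, and post-processings $q(a|x,a')$ such that $M_{a|x}=\sum_{a'}M'_{a'|f(x)}\,q(a|x,a')$ for all $a,x$. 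This is precisely Eq.~\eqref{Eq:SimuablityDef} specialized to the deterministic pre-processing $p(x'|x)=\delta_{x',f(x)}$.

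For the forward direction ($\Rightarrow$) I would invoke the pigeonhole principle: since $f$ sends three inputs to only two values, at least two settings, say $s$ and $t$, satisfy $f(s)=f(t)=:x'$. Then both $M_{a|s}$ and $M_{a|t}$ are post-processings of the single parent \ac{POVM} $\{M'_{a'|x'}\}_{a'}$, which by Eq.~\eqref{Eq:IncompatiblityDef} is exactly the statement that the pair $(s,t)$ is jointly measurable. Hence $\mathcal{M}\in\mathrm{JM}^{(s,t)}$, and in particular $\mathcal{M}$ lies in the union $\mathrm{JM}^{(1,2)}\cup\mathrm{JM}^{(1,3)}\cup\mathrm{JM}^{(2,3)}$.

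For the reverse direction ($\Leftarrow$) I would give an explicit construction. Suppose $\mathcal{M}\in\mathrm{JM}^{(s,t)}$ for some pair, and let $u$ denote the remaining setting. Joint measurability of $(s,t)$ supplies a parent \ac{POVM} $\{M'_{a'|1}\}_{a'}$ together with post-processings $q(a|s,a')$ and $q(a|t,a')$ reproducing $M_{a|s}$ and $M_{a|t}$. Taking $M'_{a'|2}:=M_{a'|u}$ with the trivial post-processing $q(a|u,a')=\delta_{a,a'}$, and setting $f(s)=f(t)=1$, $f(u)=2$, yields a valid deterministic $2$-simulation, so $\mathcal{M}\in\mathrm{SIM}^{\mathrm{Det}}_2$. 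Here it is essential that the definition of $\mathrm{JM}^{(s,t)}$ imposes no constraint on the third measurement, so the self-simulation of $u$ is always admissible.

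I do not expect a deep obstacle: the content is a pigeonhole observation combined with the definition of joint measurability. The one point deserving care, and worth stating explicitly, is that deterministic pre-processing forces a genuine \emph{partition} of the three settings between the two simulating devices, in contrast to the mixed behavior that probabilistic pre-processing would permit. It is exactly this rigidity that makes the \emph{union} of the $\mathrm{JM}^{(s,t)}$, rather than their convex hull $\mathrm{JM}^{\mathrm{conv}}_2$, the correct characterization, thereby foreshadowing the strictness separating $\mathrm{SIM}^{\mathrm{Det}}_n$ from $\mathrm{SIM}_n$ and $\mathrm{Conv}(\mathrm{SIM}_n)$ in the main hierarchy.
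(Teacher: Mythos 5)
Your proposal is correct and follows essentially the same route as the paper: both arguments unfold the deterministic pre-processing into a function $f:\{1,2,3\}\to\{1,2\}$, observe (by case enumeration in the paper, by pigeonhole in your version) that two settings must share a parent POVM and hence form a jointly measurable pair, and conversely build a deterministic simulation from a pair-parent POVM plus self-simulation of the third measurement. Your write-up is somewhat more explicit about the converse construction and the degenerate case where one simulator is unused, but the underlying idea is identical.
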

\begin{proof}
A deterministic pre-processing means that the simulators do not mix for a particular setting and either measurement $x' = 1$ of $\mathcal{M}'$ is used to simulate either one, two, three, or none of the measurements of $\mathcal{M}$ deterministically (the same for $x'=2$ accordingly). In the case $x' = 1$ is never selected, we always use $x'=2$, which reduces to standard joint measurability. Similarly, when $x' = 1$ is always selected, we never use the setting $x'=2$, also reducing the consideration to standard measurement incompatibility. When one for the simulating settings, say $x' = 1$, is selected to simulate one of the measurements, while the other, say $x'=2$, is used to simulate two measurements, we recover exactly the definition of the sets $\mathrm{JM}^{(s,t)}$. As all combinations (either setting $(1,2)$, $(1,3)$, $(2,3)$) can be jointly measurable, we obtain exactly the union of  $ \mathrm{JM}^{(1,2)}, \mathrm{JM}^{(1,3)}, \mathrm{JM}^{(2,3)}$. 
\end{proof}

Having Observation~\ref{observation1} at hand, we can directly proceed and proof Result~\ref{result1-formal}, which shows that probabilistic pre-processing is generally useful.

\begin{result_reset} [Insufficiency of deterministic pre-processing]
\label{result1-formal}
 There exists some assemblages $\mathcal{M}$ with $m=3$ measurements that are $2$-simulable, i.e., $\mathcal{M} \in \mathrm{SIM}_{2}$ which cannot be simulated with deterministic pre-processing, i.e., $\mathcal{M} \notin \mathrm{SIM}^{\mathrm{Det}}_{2}$. Hence, it holds generally that the set $\mathrm{SIM}_{n}$ of measurements that are $n$-simulable, is strictly larger than the set $\mathrm{SIM}^{\mathrm{Det}}_{n}$ of $n$-simulable measurements with deterministic pre-processing.
\end{result_reset}

\begin{proof}
Note that to make the statement for a general $n$, it is enough to find some $n$ where the sets $\mathrm{SIM}_{n}$ and $\mathrm{SIM}^{\mathrm{Det}}_{n}$ do not coincide. We consider a specific assemblage $ \mathcal{M} $ of $m=3$ measurements. In particular, we are considering projective measurements according to the eigenstates of $\sigma_x,\sigma_z$ and the Hadamard operator $H = \tfrac{1}{\sqrt{2}}\begin{pmatrix}
1 & \phantom{-}1 \\
1 & -1
\end{pmatrix}$. This example was already introduced in Section~\ref{SubSection:Simulability}. We consider the depolarizing noise robustness of $\mathcal{M}$ according to Eq.~\eqref{Eq;NoiseModel} and compare deterministic with probabilistic pre-processing. The deterministic critical visibility can be found easily by running through all deterministic strategies, which reduces the remaining optimization (over the simulating assemblage $\mathcal{M}'$) to an \ac{SDP}. For the probabilistic pre-processing we chose a particular strategy, given by $p(1 \vert 1) = 0, \ p(1 \vert 2) = 1, \ p(1 \vert 3) = \tfrac{1}{2}$, with the other probabilities given by normalization. This choice seems optimal after a heuristic numerical exploration, as already discussed in Section~\ref{SubSection:Simulability}. We find critical visibilities of $\eta_{\mathrm{det}} \approx 0.7654$ and $ \eta_{\mathrm{prob}} \approx 0.8150$. Hence, $\mathrm{SIM}^{\mathrm{Det}}_{n} \subset \mathrm{SIM}_{n}$. 
\end{proof}

\subsection{Simulability vs. simulability with correlated classical pre-and post-processing.}

Before going on with proving our main results, we want to discuss a nuanced extension to measurement simulability as defined in Eq.~\eqref{Eq:SimuablityDef} to the \emph{randomness assisted} model in Eq.~\eqref{Eq:SimuablityDefRandomness}. In the latter, one has access to a random variable $\Lambda$ taking on values $\lambda$ which can be used to coordinate (correlate) the pre-processing and the post-processing. While it is clear that the model including randomness is at least as general as the model without, it is not obvious that it becomes more powerful. Mathematically speaking, the question becomes whether $ \mathrm{SIM}_{n} \subset   \mathrm{SIM}^{\lambda}_{n} $. While we do not have a decisive answer to this question, we want to argue here that it is likely the case. In the case it holds $ \mathrm{SIM}_{n} = \mathrm{SIM}^{\lambda}_{n} $, what formally needs to be shown is that, for any assemblage $\mathcal{M}$ for which a decomposition 
\begin{align}
M_{a \vert x} = \sum_{\lambda} \pi(\lambda) \sum_{x'=1}^n p(x' \vert x, \lambda) \sum_{a'}  M'_{a'|x'} \ q(a|x,a', \lambda), \ \forall a,x,
\end{align}
exists, there also exists a decomposition of the form 
\begin{align}
\label{Eq:SimulabilityRandomnessArgument}
M_{a \vert x} = \sum_{x'=1}^n \tilde{p}(x' \vert x) \sum_{a'}  \tilde{M}'_{a'|x'} \ \tilde{q}(a|x,a'), \ \forall a,x.
\end{align}
\indent While it is unproblematic to introduce some additional variable $\lambda$ in the post-processing of Eq.~\eqref{Eq:SimulabilityRandomnessArgument} by artificially increasing the number of outcomes $a'$, it is not possible to also introduce a similar dependence in the pre-processing. The reason for this is that the number of settings $x$ of the assemblage $\mathcal{M}$ are fixed by design (as $\mathcal{M}$ is given) and also the number of measurement settings $x'$ of $\tilde{\mathcal{M}}'$ is bounded from above as we allow maximally $n$ measurements. Hence, there is generally no direct way to re-write the model into a \emph{randomness assisted} model.  What is currently preventing a conclusive answer is the problem that both sets, i.e., $ \mathrm{SIM}_{n}$ and $ \mathrm{SIM}^{\lambda}_{n} $ do not admit a known characterization that allows to find a global optimum through any efficient optimization method.

\subsection{Non-convexity of the set of $n$-simulable assemblages}
\label{Sec:Nonconvexity}

While the concept of measurement simulability has been studied to quite some extent in \cite{Guerini2017}, there is surprisingly little known about its \emph{geometric structure}. In particular, while it is suspected that the question of deciding whether an assemblage is $n$-simulable is not directly decidable by an SDP, it is unclear whether this is indeed the case. More crucially, it is not even known whether the set  $ \mathrm{SIM}_{n}$ is convex or not for $n > 1$. Answering this questions is not only important from an optimization perspective but also contributes to further understanding the operational meaning of measurement simulability. In the following, we show that the sets $ \mathrm{SIM}_{n}$ are, in general, not convex, by explicitly proving the non-convexity of $ \mathrm{SIM}_{2}$.

\begin{result_reset}[Non-convexity of the set of $n$-simulable measurements]
\label{result2-formal}
The assemblage $\mathcal{M}^{\eta}$ containing $m=3$ measurements, corresponding to the noisy Pauli $\sigma_x, \sigma_y, \sigma_z$ observables, according to the noise model in Eq.~\eqref{Eq;NoiseModel}, satisfies $\mathcal{M}^{\eta} \in \mathrm{Conv}(\mathrm{SIM}_{2})$ for $ \eta \leq \tfrac{(\sqrt{2} + 1)}{3} \approx 0.804$.  On the other hand\footnote{Also, numerical evidences suggest that $ \mathcal{M}^{\eta} \notin \mathrm{SIM}_{2}$, if and only if $\eta>\tfrac{1}{\sqrt{2}}\approx 0.7071$.},  $ \mathcal{M}^{\eta} \notin \mathrm{SIM}_{2}$, for $ \eta = \tfrac{(\sqrt{2} + 1)}{3} \approx 0.804$. This implies that $\mathrm{SIM}_{2}$ is strictly contained in $\mathrm{Conv}(\mathrm{SIM}_{2})$.
\end{result_reset}

\begin{proof}
To keep the discussion and the notation simple,  we only discuss the case $n=2$. Most arguments and constructions, however, straightforwardly generalize to the case of an arbitrary $n$.
An assemblage $\mathcal{M}$ is said to be $2$-simulable if it admits a decomposition of the form
\begin{align}\label{eq:2-sim}
M_{a \vert x} = \sum_{x'=1}^2 p(x'\vert x) \sum_{a'}  M'_{a'|x'} q(a|x,a'),
\end{align}
for some assemblage $\mathcal{M}'$ with $n=2$ measurements and pre-and post-processing probabilities $  p(x'\vert x)$, $q(a|x,a')$. 
Without loss of generality, we can take the post-processing $q(a|x,a')$ to be deterministic, as all randomness can be absorbed in the definition of $\mathcal{M}'$. So $q(a|x,a')$ is considered to be fixed in a canonical way, i.e., $q(a|x,a')=\delta_{a, a'_x}$, see, e.g., \cite{PhysRevLett.123.180401}. Deciding $2$-simulability of  $\mathcal{M}^{\eta}$ can, thus, be defined in terms of the following optimization problems
\begin{equation}\label{eq:n-sim_SDP}
\begin{split}
\text{\bf Given }  \mathcal{M}^\eta, p \ \ \ &\\
\text{\bf find }  \nu_p = \min_{\mathcal{M}', \lambda}\ & \sum_{a,x} \lambda_{a|x},\\
\text{ subject to} & -\lambda_{a|x} \openone \leq M_{a|x}^\eta - \sum_{x'a'} p(x'|x)  q(a|x,a') M'_{a'|x'} \leq \lambda_{a|x} \openone,\  \forall \ a,x\\
& M_{a'|x'} \geq 0, \ \forall \ a',x',\ \ \sum_{a'} M'_{a'|x'} = \openone, \ \forall x',\\
%& p(x'|x)\geq 0 \ \forall x,x',\ \ \sum_{x'} p(x'|x)=1,
\end{split}
\end{equation}
and, for fixed $\mathcal{M}^\eta$,
\begin{equation}\label{eq:n-sim_opt}
\nu^* = \min_{p \in \mathcal{P}} \nu_p.
\end{equation}
It is straightforward to see that $\nu^* =0 \iff  \mathcal{M}^{\eta} \in \mathrm{SIM}_{2}$. While Eq.~\eqref{eq:n-sim_SDP} is an SDP, Eq.~\eqref{eq:n-sim_opt} cannot be directly formulated as an SDP, since it involves products of $p$ and $\mathcal{M}'$. The idea is to estimate the optimal value of problem \eqref{eq:n-sim_opt}, by approximating the space of distribution $\mathcal{P}=\{ \vec{p} \in \mathbb{R}^{nm}\ |\ p(y|x)\geq 0, \ \sum_{y} p(y|x)=1\}$ with a discrete grid and estimating the error obtained in doing so. First, notice that  $\nu^*$ coincides with a minimal (norm) distance between $\mathcal{M}^\eta$ and $\mathrm{SIM}_{2}$, i.e.,  $\nu^* =\min_{\mathcal{M}_s \in \mathrm{SIM}_{2}} \| \mathcal{M}^\eta - \mathcal{M}_s\|_{A}$, where we define the norm $\|\cdot \|_{A}$ of an assemblage (see also \cite{Tendick2023}) as $\|\mathcal{M} \|_{A}:= \sum_{a,x} \|M_{a|x} \|$, where $\| \cdot\|$ is the operator norm.

Let us consider  the optimal solution $(p^*, \mathcal{M}^{\prime *})$ of Eq.~\eqref{eq:n-sim_opt}. It defines a minimal distance $\nu^*$ between the original assemblage  $\mathcal{M}^\eta$ and the simulated one $\mathcal{M}_s^*$, constructed from $(p^*, \mathcal{M}^{\prime *})$ according to Eq.~\eqref{eq:2-sim}. Now, consider a finite set $\mathcal{P}_g \subset \mathcal{P}$ and let us approximate the distribution $p^*$ with $p_0 \in \mathcal{P}_g$ and construct, via Eq.~\eqref{eq:2-sim}, the corresponding simulated assemblage $\mathcal{M}_s^{g,0}$. We observe that by choosing for $\mathcal{P}_g$ a hypercubic lattice of step-size $\ell$ , we can find $p_0 \in \mathcal{P}_g$ such that $|p^*(x'|x)-p_0(x'|x)| \leq \ell/2$ for all $x,x'$, i.e., every component of the probability vector is at most at distance $\ell/2$ from the same component of a grid point. We have that $(p_0, \mathcal{M}^{\prime *})$ is a feasible solution of the optimization problem on the grid, i.e., the problem $\min_{p \in \mathcal{P}_g} \nu_p$. We define the value of the corresponding  objective function as $\nu_{g,0}$ and estimate the error we obtain when approximating $p^*$ with $p_0$, namely,
\begin{equation}
\nu_{g,0} := \| \mathcal{M}^\eta - \mathcal{M}_s^{g,0}\|_{A} \leq  \| \mathcal{M}^\eta - \mathcal{M}_s^*\|_{A} + \| \mathcal{M}_s^{g,0} - \mathcal{M}_s^*\|_{A} \leq \nu^* +\varepsilon,
\end{equation}
where $\varepsilon$ represents any upper bound on the quantity $\| \mathcal{M}_s^{g,0} - \mathcal{M}_s^*\|_{A}$. 
The optimal solution of the problem on the grid, i.e., $\nu^*_p := \min_{p \in \mathcal{P}_g} \nu_p$, can in principle improve the value of  $\nu_{g,0}$ by using an assemblage different from $\mathcal{M}^{\prime *}$. We thus have
\begin{equation}
\nu^*_g \leq \nu_{g,0} \leq \nu^* +\varepsilon\ \ \Longrightarrow \ \ \nu^*_g -\varepsilon \leq  \nu^*.
\end{equation} 
To conclude our argument, it is then sufficient to find a grid $\mathcal{P}_g$ and a good estimate of the error $\varepsilon$ such that $\nu^*_g -\varepsilon >0$, which, in turns implies $\nu^* > 0$ and consequently that $\mathcal{M}^{\eta} \notin \mathrm{SIM}_{2}$. Note that $\nu^*_g$ can be explicitly computed by computing an SDP for each grid point $p \in \mathcal{P}_g$. 
Let us now estimate $\| \mathcal{M}_s^{g,0} - \mathcal{M}_s^*\|_{A}$. We have
\begin{equation}\label{eq:grid_approx}
\begin{split}
\| \mathcal{M}_s^{g,0} - \mathcal{M}_s^*\|_{A} = \sum_{a,x} \| \sum_{a'x'} p^*(x'|x)  q(a|x,a') M_{a'|x'}^{\prime *} - \sum_{a'x'} p_0(x'|x)  q(a|x,a') M_{a'|x'}^{\prime *}\| \\
 = \sum_{a,x} \| \sum_{a'x'}  \left[ p^*(x'|x) - p_0(x'|x) \right]  \ q(a|x,a') M_{a'|x'}^{\prime *}\|
\leq \frac{\ell}{2} \sum_{a,x} \| \sum_{a'x'}   q(a|x,a')    M_{a'|x'}^{\prime *} \| \\
\leq \frac{\ell}{2} \sum_{a,x} \| \sum_{a'x'}    M_{a'|x'}^{\prime *} \| = \frac{\ell}{2} |\mathcal{A}| |\mathcal{X}| |\mathcal{X}'| 
\end{split}
\end{equation}
where we used the estimate of $|p^*(x'|x)-p_0(x'|x)| $ previously computed, the fact that $M_{a'|x'}^{\prime *}$ is positive semidefinite\footnote{For $X_i \geq 0$ if $-\delta \leq \alpha_i \leq \delta $ for all $i$, we have $-\delta X_i\leq \alpha_i X_i \leq \delta X_i$, thus $- \sum_i \delta X_i \leq \sum_i \alpha_i X_i \leq \sum_i \delta X_i$, and, finally, $\| \sum_i \alpha_i X_i \| \leq \|\sum_i \delta X_i\|$.} and that $\sum_{a'}    M_{a'|x'}^{\prime *} =\openone$, and we defined  $|\mathcal{A}|, |\mathcal{X}|$ and  $|\mathcal{X}'|$ as the cardinalities of the sets of possible values for $a,x,$ and $x'$. Inserting all the parameters in Eq.~\eqref{eq:grid_approx}, we obtain
\begin{equation}
\varepsilon := \frac{\ell}{2} |\mathcal{A}| |\mathcal{X}| |\mathcal{X}'|  = \frac{\ell}{2}\ 2 \times 3 \times 2 = 6 \ell.
\end{equation}
To compute explicitly $\nu^*_g$, we solve the SDP  in Eq.~\eqref{eq:n-sim_SDP} on a grid of $50$ points for each direction, giving $\ell = 1/50$. Note that each $p$ is defined by three parameters between $0$ and $1$, since $p(0|x)=1-p(1|x)$ for all $x$. It is, thus, sufficient to evaluate the SDP in Eq.~\eqref{eq:n-sim_SDP} on a three-dimensional cubic lattice of $50^3$ grid points. The numerical calculations give
\begin{equation}
\nu^*_g  \approx 0.1953, \ \varepsilon = \frac{6}{50} = 0.12 \ \ \Longrightarrow \ \ \nu^* \geq \nu^*_g  - \varepsilon \approx 0.1953 - 0.12 = 0.0753>0.
\end{equation}
Note that, even if the evaluation of $\nu^*_g $ contains a numerical error, this is several orders of magnitude smaller than the gap $0.0753$ (see also Observation~\ref{Observation2}). We can conclusively claim that $\mathcal{M}^{\eta} \notin \mathrm{SIM}_{2}$ for $ \eta = \tfrac{(\sqrt{2} + 1)}{3}$ and, therefore, that $\mathrm{SIM}_{2}$ is strictly contained in ${\rm Conv}(\mathrm{SIM}_{2})$.
\end{proof}

Note that the above described $\epsilon $-net method can not only be used to show that $\mathrm{SIM}_{2}$ (or more generally $\mathrm{SIM}_{n}$) is non-convex but also to estimate generically how well a particular assemblage $\mathcal{M}$ can be simulated by a fixed number of measurements. 

\begin{observation}
\label{Observation2}

We remark that the validity of the proof of Result \ref{result2-formal} relies on the implicit assumption that the numerical errors originating from the use of floating-point arithmetic are smaller than the gap $\nu^*_g  - \varepsilon$. Considering that the solver tolerances (relative duality gap, primal and dual feasibility) are of the order of $10^{-8}$ and the obtained gap $\nu^*_g  - \varepsilon$ is of the order of $10^{-1}$, this seems a reasonable assumption. Indeed, this assumption is commonly used when dealing with convex optimization methods in quantum information theory. 
To obtain a completely rigorous proof, one should derive upper and lower bounds for $\nu^*_g$  by providing feasible solutions for the primal and dual problem that are verified by exact rational arithmetic. This is indeed possible following the construction from Ref.~\cite{Bavaresco2021Strict}. However, we argue that doing this additional step to obtain a completely rigorous computer assisted proof would not lead us to any additional physical insight of the problem itself. \\
\indent We also remark at this point, that a similar analysis can be applied to Result~\ref{result1-formal}. That is, one could in principle, convert the evident gap between the numerically obtained bound for the simulability with deterministic pre-processing and the analytically obtained bound for probabilistic pre-processing into a fully analytical proof.

\end{observation}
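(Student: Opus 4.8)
The content of Observation~\ref{Observation2} that genuinely admits a proof is the claim that the floating-point computation underlying Result~\ref{result2-formal} can be promoted to a certificate checkable in exact rational arithmetic; the accompanying remark that the solver tolerances are a reasonable heuristic then follows a posteriori, since the certified gap will sit close to the floating-point one. The plan is therefore to remove the implicit assumption altogether rather than merely argue that it is mild. The whole argument rests on weak SDP duality: for each grid point $p \in \mathcal{P}_g$ the optimization in Eq.~\eqref{eq:n-sim_SDP} is a semidefinite program whose primal is a minimization, so any \emph{dual-feasible} point furnishes a rigorous lower bound on $\nu_p$. Since $\nu^*_g = \min_{p \in \mathcal{P}_g} \nu_p$, a bound valid at every grid point certifies $\nu^*_g$ from below, and the analytic error term $\varepsilon = 6\ell$ is already exact. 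Thus the only quantity I must certify is a lower bound on $\nu_p$ for each $p$.

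Concretely, I would first write down the explicit dual of Eq.~\eqref{eq:n-sim_SDP}, whose variables are Hermitian operators conjugate to the two-sided operator bound $-\lambda_{a|x}\openone \leq (\cdot) \leq \lambda_{a|x}\openone$ together with multipliers for the normalization constraints $\sum_{a'} M'_{a'|x'} = \openone$, and whose feasible region is cut out by positive-semidefiniteness conditions on the operators dual to $M'_{a'|x'} \geq 0$ and on the slack operators dual to the norm bound. I would run the solver to obtain an approximately optimal dual point, round every matrix entry to a nearby rational with controlled denominator, and then verify \emph{in exact arithmetic} that the rounded point is still dual-feasible, evaluating its now-rational objective value $c_p$. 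Taking $c = \min_p c_p$ over the $50^3$ grid points yields $\nu^*_g \geq c$ with full rigor, after which it suffices to check the single exact inequality $c - 6\ell > 0$.

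The hard part will be preserving feasibility under rounding. A numerically optimal dual solution typically sits on the boundary of the PSD cone, so one of the constrained matrices has a near-zero smallest eigenvalue, and rounding can push it negative and destroy feasibility. The remedy, which is precisely the construction of Ref.~\cite{Bavaresco2021Strict}, is to back off from optimality: one perturbs the dual point so that every matrix required to be PSD has its smallest eigenvalue bounded below by a margin $\delta$ comfortably larger than the rounding increment, at the cost of a correspondingly small decrease of the dual objective. Because the certified gap $\nu^*_g - \varepsilon \approx 0.0753$ is orders of magnitude larger than the achievable $\delta$, this back-off is harmless and $c$ stays well above $6\ell$. Exactness of the PSD checks is obtained by a rational $LDL^{\!\top}$ factorisation (or by the signs of the leading principal minors), so the whole procedure is a finite deterministic computation over the grid. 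I would stress that this vindicates the observation's claim of in-principle rigor; whether it yields extra \emph{physical} insight is an editorial judgment, not a mathematical one.

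For the analogous remark on Result~\ref{result1-formal}, one side is already analytic: the probabilistic strategy at $\eta_{\mathrm{prob}}$ is explicit and is certified simply by exhibiting the corresponding parent \ac{POVM} and checking positivity directly. What remains is to replace the numerical $\eta_{\det}$ by a closed-form threshold. By Observation~\ref{observation1}, $\mathcal{M}^\eta \in \mathrm{SIM}^{\mathrm{Det}}_2$ iff some pair among $\{\sigma_x,\sigma_z,H\}$ is jointly measurable, so $\eta_{\det}$ equals the largest pairwise joint-measurability threshold; since $H$ lies at $45^\circ$ to both $\sigma_x$ and $\sigma_z$ on the Bloch sphere, the governing pairs are the two $45^\circ$ ones, giving the analytic value $\eta_{\det} = 1/\sqrt{1+\sin(45^\circ)}\approx 0.7654$. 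Supplying an explicit incompatibility witness for each pair for $\eta$ just above this value then converts the gap between $\eta_{\det}$ and $\eta_{\mathrm{prob}}$ into a fully analytic separation, with the closed-form evaluation of the three pairwise thresholds being the only genuinely calculational step.
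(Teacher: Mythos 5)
Your proposal is correct and takes essentially the route the paper itself indicates: Observation~\ref{Observation2} is stated in the paper as a remark without a formal proof, deferring the exact-arithmetic certification entirely to the construction of Ref.~\cite{Bavaresco2021Strict}, and your sketch (round an approximately optimal dual point of \eqref{eq:n-sim_SDP} to rationals, back off from the boundary of the PSD cone by a margin $\delta$, verify feasibility by rational $LDL^{\top}$ factorisation, take the minimum $c=\min_{p\in\mathcal{P}_g} c_p$ over the grid, and check $c-6\ell>0$ exactly) is precisely that construction spelled out. You add two genuine refinements beyond the paper's text. First, you observe that only \emph{dual} feasible points are needed: the paper speaks of certifying both primal and dual bounds on $\nu^*_g$, but since $\varepsilon=6\ell$ is exact and the target inequality $\nu^*_g-\varepsilon>0$ is one-sided, a certified lower bound via weak duality at each of the $50^3$ grid points suffices. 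Second, for the remark concerning Result~\ref{result1-formal} you go further than the paper, which leaves the conversion ``in principle'': via Observation~\ref{observation1}, deterministic $2$-simulability of the noisy $\{\sigma_x,\sigma_z,H\}$ assemblage reduces to pairwise joint measurability, and the standard necessary-and-sufficient criterion for equally noisy unbiased qubit dichotomic measurements at Bloch angle $\theta$, namely $\eta\le 1/\sqrt{1+\sin\theta}$, gives the closed form $\eta_{\mathrm{det}}=1/\sqrt{1+\sin(\pi/4)}\approx 0.7654$, matching the paper's numerical value; since the criterion is an equivalence, your extra step of exhibiting incompatibility witnesses above threshold is unnecessary. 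Two minor caveats: rounding a dual point must also restore the dual's \emph{linear equality} constraints (the trace-normalization conditions tied to the coefficients of $\lambda_{a|x}$), not only the PSD inequalities — this is handled in Ref.~\cite{Bavaresco2021Strict} by re-solving for slack entries, and your margin-$\delta$ perturbation should be stated to respect these equalities; and on the Result~\ref{result1-formal} side, the membership $\mathcal{M}^{\eta_{\mathrm{prob}}}\in\mathrm{SIM}_2$ still requires the simulating POVMs themselves in closed form (the paper fixes only the pre-processing $p(x'|x)$ analytically and obtains the simulators numerically), so ``checking positivity directly'' presupposes guessing those operators exactly — this is the one remaining calculational step your sketch glosses over.
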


\subsection{The convex hull of the $n$-simulability set and its connection to compatibility structures}

Since the set $ \mathrm{SIM}_{n} $ is not convex, it is interesting to ask what the convex hull of the set physically and geometrically corresponds to. Interestingly, we show that the set $ \mathrm{Conv}(\mathrm{SIM}_{n}) $ corresponds to a different generalization of measurement incompatibility, previously studied in \cite{PhysRevLett.123.180401} under the name \emph{compatibility structures}. That is, the sets  $ \mathrm{Conv}(\mathrm{SIM}_{n}) $ and $ \mathrm{JM}^{\mathrm{conv}}_{n} $ coincide.

\begin{result_reset} [Correspondence between measurement simulability and compatibility structures]
\label{result3-formal}

The convex hull of the set $\mathrm{SIM}_{n}$, i.e., the set of all assemblages $\mathcal{M}$ that admit a decomposition of the form
\begin{align}
\label{Eq:DefConvHull}
M_{a \vert x} = \sum_{\lambda} \pi(\lambda) \sum_{x'=1}^{n} p^{\lambda}(x' \vert x) \sum_{a'} q^{\lambda}(a \vert a',x) M'^{\lambda}_{a' \vert x'},     
\end{align}
denoted by  $ \mathrm{Conv}(\mathrm{SIM}_{n}) $, coincides with the set of $n$-wise compatible assemblages $ \mathrm{JM}^{\mathrm{conv}}_{n} $ defined according to Eq.~\eqref{Def:Genuine-n-wise-comp-structures}, i.e., $ \mathrm{Conv}(\mathrm{SIM}_{n}) =  \mathrm{JM}^{\mathrm{conv}}_{n} $. 
\end{result_reset}

\begin{proof}

To make the proof as clear as possible, we first consider the special but most relevant case $n=m-1$, i.e., the scenario where one is interested in whether an assemblage $\mathcal{M}$ of $m$ measurements contains \emph{genuinely $m$ measurements} or less than $m$ measurements. In that case, the set $ \mathrm{Conv}(\mathrm{SIM}_{(m-1)}) $ defines assemblages of the form
\begin{align}
M_{a \vert x} = \sum_{\lambda} \pi(\lambda) \sum_{x'=1}^{m-1} p^{\lambda}(x' \vert x) \sum_{a'} q^{\lambda}(a \vert a',x) M'^{\lambda}_{a' \vert x'}.  \label{Eq:ConvSimulability-m-1}   
\end{align}
\indent Similarly to Observation~\ref{observation1}, let us first focus on the case where the pre-processing $p^{\lambda}(x' \vert x)$ is deterministic, i.e., it reduces to a function $x' = f^{\lambda}(x)$. It is direct to see that no more than two measurement settings $x$ should be simulated by the same measurement $x'$. Also, none of the simulating measurements $x'$ should be fully ignored. Hence, the only remaining strategy is to simulate one pair of measurements $(t_1,t_2)$ by the same measurement $x'$, while the remaining measurements are trivially simulated by the remaining settings (i.e., simulating the measurements with themselves). That means, with deterministic pre-processings, we will always describe assemblages $\mathcal{J}^{(t_1,t_2)}$, where the measurements $(t_1,t_2 \neq t_1)$ are jointly measurable. Now, as $(t_1,t_2 \neq t_1)$ can be chosen freely, and we have access to convex combinations of assemblages $\mathcal{J}^{(t_1,t_2)}$ with different $(t_1,t_2 \neq t_1)$, we recover exactly the definition of $(m-1)$-wise compatible measurements, according to Eq.~\eqref{DefnGenuineIncomp-m-wise}.  \\
\indent In the case of probabilistic pre-processings, let us decompose the probabilities $ p^{\lambda}(x' \vert x)$ as follows:
\begin{align}
p^{\lambda}(x' \vert x) \coloneqq p(x' \vert x, \lambda) = \sum_{\lambda'} \pi(\lambda' \vert \lambda) f^{\lambda'}(x' \vert x),      
\end{align}
where $f^{\lambda'}(x' \vert x) \coloneqq f(x' \vert x, \lambda')$ is a deterministic function and $\pi(\lambda' \vert \lambda)$ the probability that $\lambda'$ (corresponding to the random variable $ \Lambda' $) is chosen, given that $\lambda$ was observed as outcome of the random variable $\Lambda$. We then find that 
\begin{align}
M_{a \vert x} &= \sum_{\lambda, \lambda'} \pi(\lambda) \pi(\lambda' \vert \lambda) \sum_{x'=1}^{m-1} f^{\lambda'}(x' \vert x)\sum_{a'} M'^{\lambda}_{a' \vert x'} q^{\lambda}(a \vert a',x),  \\
&= \sum_{\mu} \pi(\mu) \sum_{x' = 1}^{m-1} f^{\mu}(x' \vert x) \sum_{a'} M'^{\mu}_{a' \vert x'} q^{\mu}(a \vert a',x), \label{Eq:Simulability-renaming-mu}
\end{align}
where we defined $ \mu = (\lambda, \lambda') $ and used that $\pi(\lambda) \pi(\lambda' \vert \lambda) = \pi(\lambda, \lambda')$. Hence, we can always restrict to deterministic functions for the pre-processings, due to the randomness available in all terms on the RHS of Eq.~\eqref{Eq:ConvSimulability-m-1}. Note that the re-naming of the dependency to $\mu$ for terms that do only depend on $\lambda$ or $\lambda'$ is justified as $\mu$ unambiguously defines both $\lambda, \lambda'$ and using $\mu$ directly would make the model only more powerful. However, as Eq. \eqref{Eq:Simulability-renaming-mu} corresponds exactly to the definition of non-genuinely $m$-wise incompatible measurements, according to Eq.~\eqref{DefnGenuineIncomp-m-wise}, we conclude that $ \mathrm{Conv}(\mathrm{SIM}_{(m-1)}) =  \mathrm{JM}^{\mathrm{conv}}_{(m-1)} $. \\
\indent With this special case as a set-up, the generalization to the case of arbitrary $m$, $n < m$ is straightforward. In analogy to Eq.~\eqref{Eq:Simulability-renaming-mu}, we can write
\begin{align}
M_{a \vert x}   = \sum_{\mu} \pi(\mu) \sum_{x' = 1}^{n} f^{\mu}(x' \vert x) \sum_{a'} M'^{\mu}_{a' \vert x'} q^{\mu}(a \vert a',x),  \label{Eq:Simulability-renaming-mu-general}
\end{align}
as definition for all assemblages $\mathcal{M}$ that are contained in the set $\mathrm{Conv}(\mathrm{SIM}_{n})$, i.e., in the convex hull of all $n$-simulable assemblages. Let us fix one specific $ \mu = \mu^* $, the term 
\begin{align}
M_{a \vert x} = \sum_{x' = 1}^{n} f^{\mu^*}(x' \vert x) \sum_{a'} M'^{\mu^*}_{a' \vert x'} q^{\mu^*}(a \vert a',x),    
\end{align}
describes the deterministic simulation of the assemblage $\mathcal{M}$ according to one of the hyper-edge collections $ \left\lbrace C_i \right\rbrace_i $ in the definition Eq.~\eqref{Def:Genuine-n-wise-comp-structures}, as the set of all $ \left\lbrace C_i \right\rbrace_i $ exhausts all deterministic assignments from $x'$ to $x$. Note again that the case of standard joint measurability is trivially included in the case of any other hyper-edge collection $C_i$ and can therefore be neglected. Now, the random variable $\mu$, is simply selecting which deterministic assignment from $x'$ to $x$ is chosen.  Note that there is no advantage in considering the same assignment from $x'$ to $x$ for different $\mu = \mu_1$ and $\mu = \mu_2$, analog to convex combinations of the same compatibility structure being unnecessary. Hence, Eq.~\eqref{Eq:Simulability-renaming-mu-general} is simply the convex combination of all possible (non-trivial) hyper-edge collections $ \left\lbrace C_i \right\rbrace_i $, which by definition is equal to the notion of $n$-wise compatibility. Therefore, it holds $ \mathrm{Conv}(\mathrm{SIM}_{n}) =  \mathrm{JM}^{\mathrm{conv}}_{n} $.

\end{proof}

Our finding in Result~\ref{result3-formal} implies that two, a priori different generalizations of measurement incompatibility that have previously been studied \cite{Guerini2017, PhysRevLett.123.180401}, have a close connection through their convex hull. Note here, that the above proof is general enough that it can easily be adapted to formulate any compatibility hypergraph considered in \cite{PhysRevLett.123.180401} in terms of the measurement simulability framework. This allows for a more compact way to formally capture the physical constraints implied by any compatibility hypergraph. 

\subsection{Genuine $n$-wise incompatibility and its connection to $n$-copy incompatibility}

As introduced in Sec~\ref{Sec:Revisiting}, $n$-copy incompatibility is yet another generalization to measurement incompatibility with a clear operational interpretation. While it does not allow directly to count how many measurements are genuinely contained in an assemblage $\mathcal{M}$, it allows to count the number of copies of any state $\rho$ that is required for it to be jointly measurable. Intuitively, having many copies of a state $\rho$ available is very powerful for information processing. It is hence important to ask, where $n$-copy joint measurability ranks in the hierarchy of set inclusions, according to Eq.~\eqref{Eq:MainResult}. As we will establish in our next result, $n$-copy joint measurability is indeed the strongest (currently known) generalization of measurement incompatibility, i.e., the set $\mathrm{Copy}_{n} $ constitutes the largest subset of the set of all measurement assemblages $ \mathrm{All}_{m} $ discussed in this work.

\begin{result_reset} [Connection between $n$-wise compatible measurements and $n$-copy jointly measurable measurements]
\label{result4-formal}

The set of all $n$-wise compatible assemblages $\mathcal{M}$, i.e., the set $\mathrm{JM}^{\mathrm{conv}}_n$ is strictly contained in the set $\mathrm{Copy}_n$ of all $n$-copy jointly measurable assemblages. 

\end{result_reset}

\begin{proof}

\indent Our proof is based on two ingredients. First, it was shown in~\cite{PhysRevA.97.062102} that $n$-simulability, according to Eq.~\eqref{Eq:SimuablityDef} implies $n$-copy joint measureability according to Eq.~\eqref{Eq:DefinitionKcopy}. Even more, the resulting parent \ac{POVM} acting on the $n$-copies can always be chosen to be separable. Hence, $\mathcal{M} \in \mathrm{SIM}_n \implies \mathcal{M} \in \mathrm{Copy}_n$. Note that while the set $ \mathrm{SIM}_n $ is generally non-convex (see Result~\ref{result2-formal}), the set $ \mathrm{Copy}_n $ is convex by design. Second, we have shown in Result~\ref{result3-formal} that the convex hull of the set $\mathrm{SIM}_n$, i.e., the set $\mathrm{Conv}(\mathrm{SIM}_{n}) $ is equivalent to the set $\mathrm{JM}^{\mathrm{conv}}_n$ of $n$-wise compatible measurements. Now, as $\mathrm{JM}^{\mathrm{conv}}_n \equiv \mathrm{Conv}(\mathrm{SIM}_{n}) $, is the smallest convex set containing $\mathrm{SIM}_n$ and  $ \mathrm{Copy}_n$ is convex, it follows that $ \mathrm{JM}^{\mathrm{conv}}_n \subseteq \mathrm{Copy}_n$. Finally, we know from Section~\ref{Sec:Revisiting} that for the three noisy Pauli measurements, the critical $2$-wise joint measurability visibility is given by $ \eta_{ \mathrm{JM}^{\mathrm{conv}}_2} = \tfrac{\sqrt{2}+1}{3}$, while the $2$-copy joint measurability visibility for the Pauli measurements is given by $ \eta_{\mathrm{Copy}_2} = \tfrac{\sqrt{3}}{2} > \eta_{ \mathrm{JM}^{\mathrm{conv}}_2}$, hence it holds $ \mathrm{JM}^{\mathrm{conv}}_n \subset \mathrm{Copy}_n$.

\end{proof}

\subsection{A lower bound on the $n$-copy joint measureability of any $m$ measurements}

Our final result concerns again the notion of $n$-copy joint measurability and in particular its connection to optimal cloning \cite{PhysRevA.58.1827}. In particular, we will exploit this connection to derive a universal lower bound on the critical visibility of $n$-copy joint measurability of any assemblage $\mathcal{M}$. Let us first note, that this connection has already been exploited in \cite{Heinosaari2016} to derive a universal bound on the critical visibility of standard joint measurability. More precisely, it is known that in quantum theory, a state $\rho_A$ defined on a system $A$ can generally not be broadcasted into a state $\rho_{AA'}$ defined on system $AA'$ such that $\rho_{A'} = \mathrm{Tr}_A[\rho_{AA'}] = \mathrm{Tr}_{A'}[\rho_{AA'}] = \rho_{A}$. Note that the notion of cloning differs from broadcasting only the requirement that $\rho_{AA}'$ needs to be a product state in the cloning scenario.  While such a procedure is impossible, an approximate broadcasting, leading to $m$ noisy copies of the form 
\begin{align}
\tilde{\rho} = c(d,m) \rho + (1-c(d,m))\dfrac{\mathds{1}}{d},    
\end{align}
is always possible. Here, $c(d,m) = \tfrac{m+d}{m(1+d)}$, independently of $\rho$. Using the fact that noise-free measurements on a noisy state can be interpreted as noisy measurements on noise-free states, it was found in \cite{Heinosaari2016} that the critical visibility $\eta_{\mathrm{JM}}$ for an assemblage $\mathcal{M}$ of $m$ measurements is bounded such that $\eta_{\mathrm{JM}} \geq \tfrac{m+d}{m(1+d)}$. We generalize this result in the following for the notion of $n$ copy joint measurability. 

\begin{result_reset} [A universal lower bound on the $n$ copy joint measurability for $m$ measurements]
\label{result5-formal}

Let $\mathcal{M}$ be any assemblage of $m$ $d$-dimensional quantum measurements. The critical visibility $ \eta_{\mathrm{Copy}_n}$ for $\mathcal{M}$ according to the depolarizing-noise model in Eq.~\eqref{Eq;NoiseModel} to be jointly measurable on $n$ copies of any state $\rho$ is lower bounded such that 
\begin{align}
 \eta_{\mathrm{Copy}_n} \geq \dfrac{n(d+m)}{m(d+n)}.   
\end{align}
\end{result_reset}
\begin{proof}
The proof follows analogous steps as the theorem presented in \cite{Heinosaari2016} that makes use of optimal cloning to construct explicit parent joint measurements for the case of standard joint measurability. Instead of using the approximate cloning procedure from \cite{PhysRevA.58.1827} for the case of $1 \rightarrow m$ copies, we make use of the $n \rightarrow m$ cloning procedure. In this case, it is known that the noisy copies of the state $\rho$ will have a visibility of $c_n(d,m) = \tfrac{n(d+m)}{m(d+n)}$. For a state of the form $\tilde{\rho} = c_n(d,m) \rho + (1-c_n(d,m))\tfrac{\mathds{1}}{d}$ it holds that $ \mathrm{Tr}[M_{a \vert x} \tilde{\rho}] = \mathrm{Tr}[M^{\eta}_{a \vert x} \rho] $ with
\begin{align}
M^{\eta}_{a \vert x} = \eta M_{a \vert x} + (1-\eta) \Tr[M_{a \vert x}] \dfrac{\mathds{1}}{d},   
\end{align}
for $\eta = c_n(d,m)$. As $m$ measurements are trivially jointly measurable on $m$ copies of any state $\rho$, we find  $\eta_{\mathrm{Copy}_n} \geq  c_n(d,m) = \dfrac{n(d+m)}{m(d+n)}$.    
\end{proof}
We note that in the case of the $2$ copy joint measurability of $3$ qubit measurements, we obtain a bound of $ c_2(2,3) = \tfrac{5}{6} \approx 0.833$, giving a reasonably good lower bound to the incompatibility of the three Pauli measurements given by $\eta = \sqrt{3}/2 \approx 0.866$.

\section{Implications following from our results}
\label{Sec:Implications}

Our results have direct implications for findings that have already been established in the literature prior to this work. In particular, our results allow us to improve and refine previously established results. 

\subsection{Implications for the semi-device-independent certification of number of measurements}

We consider first the work in \cite{PhysRevA.109.062203}. There, the authors asked the question whether the number of measurements used in an experiment can be certified in a semi-device-independent scenario. That is, the authors consider a steering scenario, in which Alice's measurement device is uncharacterized and Bob has full control over his system and measurement device. In particular, Bob can perform state tomography on his share of the quantum state. By sharing a quantum state $\rho$, Alice will prepare the steering assemblage $ \Vec{\sigma} =  \left\lbrace \left\lbrace \sigma_{a \vert x} \right\rbrace_a \right\rbrace_{x} $ given by
\begin{align}
\sigma_{a \vert x} = \mathrm{Tr}_A[(M_{a \vert x} \otimes \mathds{1}_B) \rho].    
\end{align}
Note that the $\sigma_{a \vert x}$ are sub-normalized quantum states, found with probability $p(a \vert x) = \mathrm{Tr}[\sigma_{a \vert x}]$, such that $ \sum_a \sigma_{a \vert x} = \rho_B = \mathrm{Tr}_A[\rho]. $ It is said that Alice can steer Bob, whenever the assemblage $\Vec{\sigma}$ does not admit an \ac{LHS} given by
\begin{align}
\sigma_{a \vert x} = \sum_{a'} p(a \vert x, a') \sigma_{a'},   
\end{align}
where the $ p(a \vert x, a') $ are post-processing probabilities and the $ \sigma_{a'}$ are sub-normalized states such that $ \mathrm{Tr}[\sum_{a'}  \sigma_{a'}] = 1$. It is straightforward to verify that in order for Alice to steer Bob, they need to share an entangled state $\rho$ and that Alice needs to perform incompatible measurements $\mathcal{M}$. In particular, it is known that the problem of deciding steerability and measurement incompatibility are equivalent, in the sense that there is a one-to-one mapping between the problems \cite{PhysRevLett.115.230402} and any incompatible assemblage $\mathcal{M}$ can lead to a steerable assemblage $\Vec{\sigma}$ if $\rho$ is chosen appropriately~\cite{Quintino2014JM,Uola2014JM}. \\
\indent Now, the authors in \cite{PhysRevA.109.062203} certified in such a scenario that Alice not only has access to more than one measurement (uses incompatible measurements) but that she can use genuinely $n$ measurements in the sense of measurement simulability according to Eq.~\eqref{Eq:SimuablityDef}. The idea is here, that a constraint in the available measurement assemblages $\mathcal{M}$, will result in a constraint in the possible steering assemblages $\Vec{\sigma}$ that Alice can prepare for Bob. However, as $n$-simulability is hard to characterize directly, the authors used the fact that $n$-simulability implies $n$-copy joint measurability. With that, the problem reduces to a convex problem which can be solved numerically, to certify the minimal number of measurements Alice must have access to, in order to prepare $\Vec{\sigma}$. \\
\indent From our results here, specifically from Result~\ref{result3-formal} and Result~\ref{result4-formal}, it follows that the approximation of $n$-simulability by $n$-copy joint measurability can be refined by considering $n$-wise compatibility instead. This not only promises better bounds on critical visibility, but also maintains the appealing factor of being efficiently computable, since $n$-wise compatibility can be checked via an \ac{SDP}. In particular, it is now obvious that using $n$-wise compatibility as an approximation of $n$-simulability leads to the best possible bounds achievable with convex optimization tools. Let us exemplify this in the case of the three noisy Pauli measurements. The authors in \cite{PhysRevA.109.062203} find that the three Pauli measurements have to be considered as genuinely $3$ measurements in the context of simulability in the semi-device independent scenario for visibilities larger than $\tfrac{\sqrt{3}}{2}$, which is simply the critical visibility for the $2$-copy joint measurability. Notably, the authors in \cite{PhysRevA.109.062203} improve this value by considering only \emph{separable} parent \acp{POVM} in the context of $n$-copy joint measurability, leading to a visibility of $\sqrt{\tfrac{2}{3}} \approx 0.8165$. 
Due to our Result~\ref{result3-formal} and Result~\ref{result4-formal} we further improve this value to a critical visibility of $\tfrac{\sqrt{2}+1}{3} \approx 0.8047$ by using $2$-wise compatibility as a refined approximation. \\
\indent It is noteworthy that due to the strong relation between measurement incompatibility and steering, that our results can also directly be applied to study generalizations of steering. In particular\addMTQ{,} it holds
\begin{align}
\mathrm{Tr}_A\left[(M_{a \vert x} \otimes \mathds{1}_B) \lvert \Phi^{+} \rangle \langle \Phi^{+} \rvert\right] = \dfrac{M_{a \vert x}^{T}}{d},    
\end{align}
where $\lvert \Phi^{+} \rangle = \tfrac{1}{\sqrt{2}}\sum_{i=0}^{d-1} \lvert ii \rangle$ is the maximally entangled state and the transposition is with respect to the Schmidt-basis defined by $\lvert \Phi^{+} \rangle$. This implies directly that properties of $ \mathcal{M} $ can be translated into properties of $ \Vec{\sigma} $. 

\subsection{Implications for the device-and theory-independent certification of number of measurements}

Our work also has a close relationship to the work \cite{2408.08347} and provides an additional implication from the work. In the work, it is used that the number of quantum measurements one has access to, can also be certified device-independently if one considers the notion of $n$-wise compatibility, as first discussed in \cite{PhysRevLett.123.180401}. Even more, the number of measurements can also be certified in a \emph{theory-independent} way, i.e., by not using the quantum formalism specifically, but by only relying on the no-signaling principle. More precisely, let us consider a Bell experiment, in which two (distant and non-communicating) parties perform measurements indexed by $x$ for Alice and $y$ for Bob. The outcomes of the measurements are denoted by $a$ for Alice's measurements and similarly by $b$ for Bob's measurements. Treating their measurement devices as black-boxes (i.e., working in the device-independent framework), Alice and Bob make no assumptions about their measurement devices or the shared state and only make statements based on observations, i.e., on the basis of the input-output statistics $\mathbf{P} = \left\lbrace p(ab \vert xy) \right\rbrace_{a,b,x,y}$. In particular, in \cite{2408.08347}, the considered statistics $\mathbf{P}$ are only restricted by the no-signaling principle, i.e., $\mathbf{P}$ cannot be used for super-luminal communication. Formally, it has to hold
\begin{align}
\sum_a p(ab \vert xy) = p(b\vert xy) = p(b \vert y), \ \ \sum_b p(ab \vert xy) = p(a\vert xy) = p(a \vert x),
\end{align}
for all $a,b,x,y$. The point of the work \cite{2408.08347} is to show that there exists for any $n \geq 2$, $n$-wise incompatible quantum measurements, which can be certified on the basis of the input-output statistics $\mathbf{P} $ even when only the no-signaling principle is assumed (and not the formalism of quantum theory). This implies that for any finite $n \geq 2$, there exist genuinely $n$ measurements and there exist for any $n$ an experiment that cannot be reproduced with less than $n$ measurements in any no-signaling theory. \\
\indent As we saw in this work, the notion of $n$-wise compatibility is not the unique choice to define what it means to perform \emph{genuinely} $n$ measurements. However, our results imply that the results in \cite{2408.08347} directly hold true for all notions that are weaker than $n$-wise compatibility. In particular, it means that also from the perspective of $n$-simulability there exist quantum correlations obtained from $n$ quantum measurements that cannot be reproduced using arbitrary no-signaling correlations of less than $n$ measurements in the simulability framework. \\
\indent It is an interesting open question whether the result in \cite{2408.08347} can also be extended in any meaningful way to $n$-copy joint measurability. While being operationally less close, it would give an interesting generalization. The main question that needs to be answered to tackle this problem is how $n$-copy joint measurability can be certified device-independently. That is, which constraints on correlations $\mathbf{P} = \left\lbrace p(ab \vert xy) \right\rbrace_{a,b,x,y}$ arise from $n$-copy jointly measurable measurements on, say, Alice's side? Once this question is clarified, it is likely that it can also be translated into a theory-independent question (just relying on the no-signaling principle).

\section{Discussion}
\label{Sec:Discussion}

We have shown that the question of how many measurements an assemblage $\mathcal{M}$ genuinely constitutes is ambiguous and has been answered in prior literature using different ways to define $n$ genuine measurements. In particular, there are three main approaches to the question: measurement simulability, incompatibility structures, and multi-copy joint measurability. All concepts aim to generalize the notion of \emph{standard joint measurability}. We discussed the operational differences between the concepts and showed that they can be ordered in a hierarchy of strict set-inclusions, see Eq.~\eqref{Eq:MainResult} and Fig.~\ref{Fig:Circuits}. Our main result is achieved through a series of results (see Result~\ref{result1-nontechnical} to  Result~\ref{result5-nontechnical}) relating the different subsets of measurement assemblages that have previously been defined.  \\
\indent Our results therefore provide a clear picture of the previously scrambled up mix of concepts and are meant to guide further research on the generalizations of measurement incompatibility. Our results can be summarized as follows. We have first shown that probabilistic pre-processing provides advantages over deterministic pre-processings for the notion of measurement simulability. Second, we combine computational methods with the $\epsilon$-grid approach based on error estimation arguments to show that the set of $n$-simulable measurement assemblages is generally non-convex. Third, we have shown that the convex hull of the set of $n$-simulable assemblages coincides with the set of $n$-wise compatible assemblages. Finally, we have shown that $n$-wise compatibility implies $n$-copy joint measurability and derived a universal lower bound on the the visibility of $n$-copy joint measurability for any assemblage $\mathcal{M}$. \\
\indent One important message of our work is that the choice of which generalization of measurability is considered for a certain application depends highly on the context. The notions of $n$-copy joint measurability and $n$-simulability have clear operational meanings, which is somewhat lacking for $n$-wise compatibility, which has a clear geometric interpretation. This can be made evident through our reoccurring example of the three Pauli measurements and the role of freely accessible randomness. Let us consider the decomposition in Eq.~\eqref{Eq:DefConvHull}. This expression is operationally problematic, as it probabilistically mixes more than $n$ measurements. That is, to implement the simulation of an assemblage $\mathcal{M}$ according to the RHS of Eq.~\eqref{Eq:DefConvHull} one would, in practice, be forced to implement more than two measurements. Concretely, let us consider the the three noisy Pauli measurements at the point of critical visibility $\eta = \tfrac{\sqrt{2}+1}{3}$, such that they are still $2$-wise compatible. According to the decomposition found in \cite{PhysRevLett.123.180401}, these noisy measurements can be implemented by probabilistically mixing (with probability $\tfrac{1}{3}$) assemblages where two out of three Pauli measurements are noisy (with visibility $\eta_2 = \tfrac{1}{\sqrt{2}}$, making them jointly measurable) while the remaining measurement is a noise-free Pauli measurement. That is, one needs to use noise-free Pauli measurements to simulate noisy Pauli measurements in the same direction, which is operationally difficult while being geometrically unproblematic. \\
\indent On the other hand, the notion of $n$-simulability and randomness assisted $n$-simulability defined in Eq.~\eqref{Eq:SimuablityDef} and  Eq.~\eqref{Eq:SimuablityDefRandomness} do not suffer from any such problems, as they are operationally well-defined. However, their geometric structure remains unclear and the lack of convexity makes it hard to study the set of $n$-simulable assemblages. Hence, it might often be a good idea to approximate these sets by their convexification, depending on the specific application in mind. \\
\indent Our work paves the way for various future research directions. First, our work shines light on the necessity to have a better analytical (or even efficient numerical) characterization of measurement simulability. This would help to strengthen our Result~\ref{result2-nontechnical} and would likely also give insights for the comparison of $n$-simulability and randomness assisted $n$-simulability. Furthermore, it would be interesting to see whether other, operationally relevant, generalizations of measurement incompatibility exist and how they would fit into the hierarchy in Eq.~\eqref{Eq:MainResult}. Note that is straightforward to invent some new notions of incompatibility by combining some notions discussed here. For instance, while $n$-copy incompatibility is defined using a single parent \ac{POVM}, there is mathematically speaking no problem to combine it with the framework of measurement simulability or compatibility structures. Notably, $n$-copy incompatibility is also the only generalization of measurement incompatibility discussed here, where it is unclear how it can be certified in a device-independent way. Finally, it would be interesting to see whether any of the generalizations of measurement incompatibility is directly linked to the performance of certain quantum information processing protocols and whether our results hold also beyond quantum theory, in so-called generalized probabilistic theories \cite{Pl_vala_2023}.

\begin{acknowledgments}
We thank Jessica Bavaresco, Jef Pauwels, Lucas Porto, Roope Uola, and Isadora Veeren for helpful discussions. LT acknowledges funding from the ANR through the JCJC grant LINKS (ANR-23-CE47-0003).
\\[1em]
\end{acknowledgments}

\section*{Code availability}

Codes that accompany this work are available on online repositories. See \cite{code-repo-R1} for Result \ref{result1-formal} and \cite{code-repo-R2} for Result \ref{result2-formal}.

\begin{acronym}[CGLMP]\itemsep 1\baselineskip
\acro{AGF}{average gate fidelity}
\acro{AMA}{associated measurement assemblage}

\acro{BOG}{binned outcome generation}

\acro{CGLMP}{Collins-Gisin-Linden-Massar-Popescu}
\acro{CHSH}{Clauser-Horne-Shimony-Holt}
\acro{CP}{completely positive}
\acro{CPT}{completely positive and trace preserving}
\acro{CPTP}{completely positive and trace preserving}
\acro{CS}{compressed sensing} 

\acro{DFE}{direct fidelity estimation} 
\acro{DM}{dark matter}

\acro{GST}{gate set tomography}
\acro{GUE}{Gaussian unitary ensemble}

\acro{HOG}{heavy outcome generation}

\acro{JM}{jointly measurable}

\acro{LHS}{local hidden-state model}
\acro{LHV}{local hidden-variable model}
\acro{LOCC}{local operations and classical communication}

\acro{MBL}{many-body localization}
\acro{ML}{machine learning}
\acro{MLE}{maximum likelihood estimation}
\acro{MPO}{matrix product operator}
\acro{MPS}{matrix product state}
\acro{MUB}{mutually unbiased bases} 
\acro{MW}{micro wave}

\acro{NISQ}{noisy and intermediate scale quantum}

\acro{POVM}{positive operator valued measure}
\acro{PVM}{projector-valued measure}

\acro{QAOA}{quantum approximate optimization algorithm}
\acro{QML}{quantum machine learning}
\acro{QMT}{measurement tomography}
\acro{QPT}{quantum process tomography}
\acro{QRT}{quantum resource theory}
\acroplural{QRT}[QRTs]{Quantum resource theories}

\acro{RDM}{reduced density matrix}

\acro{SDP}{semidefinite program}
\acro{SFE}{shadow fidelity estimation}
\acro{SIC}{symmetric, informationally complete}
\acro{SM}{Supplemental Material}
\acro{SPAM}{state preparation and measurement}

\acro{RB}{randomized benchmarking}
\acro{rf}{radio frequency}

\acro{TT}{tensor train}
\acro{TV}{total variation}

\acro{UI}{uninformative}

\acro{VQA}{variational quantum algorithm}

\acro{VQE}{variational quantum eigensolver}

\acro{WMA}{weighted measurement assemblage}

\acro{XEB}{cross-entropy benchmarking}

\end{acronym}

\bibliographystyle{./myapsrev4-2}
\bibliography{bibliography_2.bib}

\end{document}